\pgfplotsset{compat=1.18}
\definecolor{acmDarkBlue}{RGB}{0,114,178}
\definecolor{acmGreen}{RGB}{0,158,115}
\definecolor{acmPink}{RGB}{204,121,167}
\definecolor{acmOrange}{RGB}{213,94,0}
\definecolor{acmYellow}{RGB}{240,228,66}
\definecolor{acmLightBlue}{RGB}{86,180,233}
\pgfplotsset{
    cycle list={
        {acmDarkBlue, mark=*},
        {acmGreen, mark=square*},
        {acmPink, mark=triangle*},
        {acmOrange, mark=diamond*},
        {acmYellow, mark=o},
        {acmLightBlue, mark=star}
    }
}
\newcommand*\circled[1]{\tikz[baseline=(char.base)]{
            \node[shape=circle,draw,inner sep=1.5pt] (char) {#1};}}
\newcommand*\dottedcircled[1]{\tikz[baseline=(char.base)]{
            \node[shape=circle,draw,dotted,inner sep=1.5pt] (char) {#1};}}
\newcommand{\todo}[1]{\textbf{Todo:} #1}
\theoremstyle{definition}
\newtheorem{definition}{Definition}[section]
\newtheorem{theorem}{Theorem}
\newtheorem{corollary}{Corollary}
\newtheorem{lemma}{Lemma}
\title{Perspectives on Unsolvability in the Roommates Problem\\
{\large Instances are Nearly Solvable and Stable Solutions are Nearly Unique}}
\author[ ]{Frederik Glitzner \orcidlink{0009-0002-2815-6368} and David Manlove \orcidlink{0000-0001-6754-7308}}
\affil[ ]{School of Computing Science, University of Glasgow, Glasgow G12 8QQ, UK}
\affil[ ]{\normalfont \texttt{f.glitzner.1@research.gla.ac.uk, david.manlove@glasgow.ac.uk}}
\date{}
\date{}
\begin{document}

\maketitle

\begin{abstract}
    In the well-studied {\sc Stable Roommates} problem, we seek a \emph{stable matching} of agents into pairs, where no two agents prefer each other over their assigned partners. However, some instances of this problem are \emph{unsolvable}, lacking any stable matching. A long-standing open question posed by Gusfield and Irving (1989) asks about the behavior of the probability function $P_n$, which measures the likelihood that a random instance with $n$ agents is solvable.
    
    This paper provides a comprehensive analysis of the landscape surrounding this question, combining structural, probabilistic, and experimental perspectives. We review existing approaches from the past four decades, highlight connections to related problems, and present novel structural and experimental findings. Specifically, we estimate $P_n$ for instances with preferences sampled from diverse statistical distributions, examining problem sizes up to 5,001 agents, and look for specific sub-structures that cause unsolvability. Our results reveal that while $P_n$ tends to be low for most distributions, the number and lengths of ``unstable'' structures remain limited, suggesting that random instances are ``close'' to being solvable.

    Additionally, we present the first empirical study of the number of stable matchings and the number of stable partitions that random instances admit, using recently developed algorithms. Our findings show that the solution sets are typically small. This implies that many NP-hard problems related to computing optimal stable matchings and optimal stable partitions become tractable in practice, and motivates efficient alternative solution concepts for unsolvable instances, such as stable half-matchings and maximum stable matchings.
\end{abstract}

\section{Introduction}

The {\sc Stable Roommates} problem ({\sc sr}) is a classical combinatorial problem with applications to computational social choice. Consider a group of friends that want to play one session of tennis, where everyone has preferences over who to play with. Can we match them into pairs such that no two friends prefer to play with each other rather than their assigned partners? The goal here is to find a matching of the agents (or a subset of them) without any \emph{blocking pair} of agents, where each agent in the pair prefers the other over their partner in the matching (or is unmatched). If the problem instance $I$, consisting of agents and their preferences, admits such a \emph{stable matching} $M$, then we call $I$ \emph{solvable}. Otherwise, we call $I$ \emph{unsolvable}. Even an instance with as few as four agents may be unsolvable, as \textcite{gale_shapley} showed. An instance $I$ of {\sc sr} and a matching $M$ in $I$ can be represented in the form of preference lists and a set of pairs, or as a complete graph and a set of edges. These representations are equivalent, and a formal definition follows.

\begin{definition}[{\sc sr} Instance]
    Let $I=(A,\succ)$ be an {\sc sr} \emph{instance} where $A=\{ a_1, a_2, \dots, a_n \}$, also denoted $A(I)$, is a set of $n\in\mathbb N$ agents and every agent $a_i\in A$ has a strict preference ranking (or \emph{preference relation}) $\succ_i$ over all other agents $a_j\in A\setminus\{a_i\}$. We define the \emph{rank} of an agent $a_i$ according to $a_j$, denoted rank$_j(a_i)$, to be the index of $a_i$ in $a_j$'s preference list (starting at 1). 
\end{definition}

In this paper, we stick to the case where every agent finds every other agent acceptable (also referred to as \emph{complete preferences}), but remark that the case with \emph{incomplete} (truncated) preference lists is also well-studied \cite{matchup}.

\begin{definition}[Stable Matching]
    Let $I=(A,\succ)$ be an {\sc sr} instance. A \emph{matching} $M$ in $I$ is an assignment of some (or all) agents in $A$ into unordered pairs such that each matched agent is contained in exactly one pair. A \emph{blocking pair} of a matching $M$ is a pair of two distinct agents $a_i, a_j\in A$ such that either $a_i$ is unassigned in $M$ or $a_j \succ_i M(a_i)$, and either $a_j$ is unassigned or $a_i \succ_j M(a_j)$, where $M(a_i)$ (respectively $M(a_j)$) is the partner of $a_i$ ($a_j$) in $M$. If $M$ does not admit any blocking pairs, then it is called \emph{stable}.
\end{definition}

The {\sc sr} model draws its relevance from being a very general matching under preferences model, as well as from its many practical applications. As the name suggests, it can model campus housing allocation where two students either share a room or a flat \cite{PPR08}. Furthermore, {\sc sr} can model pairwise kidney exchange markets \cite{roth05}, peer-to-peer networks such as file-sharing networks \cite{GMMR07}, pair formation in chess tournaments \cite{KLM99}, and the well-known bipartite {\sc Stable Marriage} problem \cite{gusfield89} (if allowing incomplete preference lists). The {\sc sr} problem is well-studied \cite{gusfield89, matchup}, and \textcite{irving_sr} presented an algorithm that finds a stable matching or determines that none exist in linear time, which is commonly referred to as \emph{Irving's algorithm}. 

A solvable instance is shown in Example \ref{table:solvable}, where $a_1$ ranks the agents $a_2$, $a_5$, $a_3$, etc., in linear order. The problem instance admits the matching $M=\{\{a_1, a_2\}, \{a_3, a_4\}, \{a_5,a_6\} \}$ indicated in circles, the stability of which can be easily verified by hand. However, Example \ref{table:unsolvable} shows an instance with six agents that does not admit any stable matching.

\captionsetup[table]{name=Example}
\begin{table}[!htb]
\centering
\begin{minipage}{.45\linewidth}
\centering
    \begin{tabular}{ c | c c c c c }
    $a_1$ & \circled{$a_2$} & $a_5$ &  $a_3$ & $a_4$ & $a_6$ \\
    $a_2$ & $a_5$ & $a_3$ & \circled{$a_1$} & $a_6$ & $a_4$ \\
    $a_3$ & \circled{$a_4$} & $a_2$ & $a_6$ & $a_1$ & $a_5$ \\
    $a_4$ & $a_1$ & $a_2$ & $a_5$ & \circled{$a_3$} & $a_6$ \\
    $a_5$ & \circled{$a_6$} & $a_2$ & $a_1$ & $a_4$ & $a_3$ \\
    $a_6$ & \circled{$a_5$} & $a_3$ & $a_2$ & $a_4$ & $a_1$
    \end{tabular}
\caption{A solvable {\sc sr} instance}
\label{table:solvable}
\end{minipage}%
\begin{minipage}{.45\linewidth}
\centering
    \begin{tabular}{ c | c c c c c }
    $a_1$ & \circled{$a_2$} & $a_5$ & \dottedcircled{$a_3$} & $a_4$ & $a_6$ \\
    $a_2$ & $a_4$ & \circled{$a_3$} & \dottedcircled{$a_1$} & $a_6$ & $a_5$ \\
    $a_3$ & $a_5$ & $a_4$ & \circled{$a_1$} & \dottedcircled{$a_2$} & $a_6$ \\
    $a_4$ & $a_1$ & \circled{$a_5$} & \dottedcircled{$a_6$} & $a_2$ & $a_3$ \\
    $a_5$ & \circled{$a_6$} & $a_2$ & \dottedcircled{$a_4$} & $a_1$ & $a_3$ \\
    $a_6$ & $a_1$ & $a_2$ & $a_3$ & \circled{$a_4$} & \dottedcircled{$a_5$} 
    \end{tabular}
\caption{An unsolvable {\sc sr} instance}
\label{table:unsolvable}
\end{minipage} 
\end{table}
\captionsetup[table]{name=Table}

In their landmark study of the {\sc Stable Marriage} and {\sc Stable Roommates} problems, \textcite{gusfield89} established rich structural properties of these problems and presented many algorithmic results. The authors also posed 12 open questions, most of which have received significant attention since their publication 35 years ago. Problem 8 in their list asks about the limit behaviour of the probability $P_n$ that a random {\sc sr} instance with $n$ agents is solvable -- a question that remains largely unanswered to this day \cite{openproblems19}. The key question is whether $\lim_{n\rightarrow\infty} P_n=0$ or $\lim_{n\rightarrow\infty} P_n>0$. 

Another fundamental {\sc sr}-related question posed by \textcite{gusfield89} concerns the existence of a succinct certificate for the unsolvability of an instance. The question was answered positively by \textcite{tan91_1}, who generalised the notion of a stable matching to a new structure called a \emph{stable partition}. A stable partition is a cyclic permutation $\Pi$ of the agents, where every agent prefers their successor in $\Pi$ over their predecessor in $\Pi$, and no two agents strictly prefer each other over their predecessors (for example, $\Pi=(a_1 \; a_2\; a_3)(a_4\; a_5\; a_6)$ is the unique stable partition of Example \ref{table:unsolvable}, indicated in dotted and unbroken circles). Cycles can be of even or of odd length and, as such, are referred to as \emph{even cycles} and \emph{odd cycles}, respectively.

It is known that random {\sc sr} instances with a large number of agents are unlikely to admit any stable matching due to the likely existence of odd cycles and, even if they are solvable, finding ``optimal'' or ``fair'' stable matchings (in the sense of, for example, finding a stable matching with the maximal number of first choices) is generally NP-hard \cite{CooperPhD}. Furthermore, computing \emph{almost stable matchings} (referring to matchings with the minimum number of blocking pairs) is a notoriously difficult problem from a computational complexity perspective \cite{abraham06}. However, in this paper, we aim to show empirically and, in some cases, structurally, that 
\begin{itemize}
    \item the structures contained in the preferences that make instances unsolvable are few and small, meaning that most instances are very close to admitting a stable matching. This also motivates the use and further study of matchings that are stable within the largest possible group of agents (so-called \emph{maximum stable matchings}, stable half-matchings (in which we allow half-integral assignments) and similar solution concepts for unsolvable instances;
    \item the number of stable matchings and stable partitions, as well as the number of distinct pairs and cycles included within them, is very small for most statistical cultures compared to known adversarial families of instances that admit exponentially many such solutions. This suggests that many NP-hard problems related to computing optimal stable solutions are often quickly solvable in practice.
\end{itemize}

\paragraph{Outline and Contributions}

In this paper, we first review different attempts at solving the key solvability question (whether $\lim_{n\rightarrow\infty}P_n=0$ or not), and the connections between this and related problems studied in the literature, in Section \ref{sec:survey}. Then, in Section \ref{sec:experiments}, we provide novel experimental insights into the structures contained in random {\sc sr} instances with the aim of presenting new perspectives and further intuitions for unsolvable instances. After extending previous experiments that approximate $P_n$ directly for different types of instances with preferences sampled from different statistical distributions, we zoom in on the instances and investigate the likelihood of existence and numbers of different types of stable matchings and stable partitions and the structures contained within them, and then investigate the invariant structures that make instances unsolvable. In Section \ref{sec:implications}, we compare these observations to novel experimental results about matchings that are stable in a large sub-instance (\emph{maximum stable matchings}) and ``optimal stable matchings''. Finally, we conclude the paper with a short summary and and outlook on potential directions of future research regarding the solvability question, and {\sc sr} more broadly, in Section \ref{sec:conclusion}.

Our results suggest that although $P_n$ is expected to be low for most types of instances and sufficiently large $n$, the number of odd cycles is expected to be low, meaning that random instances are nearly stable. Furthermore, although all instances allow some sort of stable structures when relaxing the conditions of a stable matching (for example, to stable partitions), even these are usually nearly unique in the sense that there is very little flexibility and very few such solutions. This renders many NP-hard problems tractable in practice (such as computing stable half-matchings with a maximum number of first choices), motivates easy-to-compute alternative solution concepts similar to stable matchings for unsolvable instances (such as maximum stable matchings) and provides new intuition for further work on the key question regarding $P_n$.

\section{Background and Definitions}
\label{sec:survey}

In this section, we will outline different approaches aimed at understanding random instances and unsolvable structures (Section \ref{sec:solvback}), the definition and results of so-called stable partitions which shed light on the underlying structure of unsolvable instances (Section \ref{sec:sp}), and alternative solution concepts for unsolvable instances and their relation to stable matchings (Section \ref{sec:dealing}).

\subsection{Solvability and Random Instances}
\label{sec:solvback}

In an attempt to resolve the fundamental question posed by \textcite{gusfield89} regarding $P_n$, \textcite{pittel93problem} proved an asymptotic lower bound on $P_n$ for instances with preferences chosen uniformly at random and showed that the expected number of stable matchings is $\mathbb{E}[S_n]=e^{\frac{1}{2}}$. However, he also noted that his results are not sufficient to settle the key question. In the case that $P_n$ converges, though, its convergence rate is guaranteed to be slow and no faster than $n^{-\frac{1}{2}}$. Furthermore, Pittel established that with super polynomially high conditional probability, the sum of ranks of all agents' partners is close to $n^{\frac{3}{2}}$ for every stable matching and that the highest rank of any agent's partner is of order $n^{\frac{1}{2}}\log n$, implying that a stable matching in {\sc sr} is, in general, likely to be well-balanced (i.e., partners are likely skewed towards the lower ranks). Similarly, \textcite{pittelirving94} proved an asymptotic upper bound on $P_n$, therefore establishing that $\frac{2e^{\frac{3}{2}}}{\sqrt{\pi n}} \lesssim P_n \lesssim \frac{\sqrt{e}}{2}$. Furthermore, the authors included some empirical evidence on solvability probabilities for instances with preferences chosen uniformly at random with the number of agents ranging between 100 and 2000. The results suggest that for this range of $n$, $P_n$ decreases from around 64\% down to around 33\%, indicating that their asymptotic upper bound of around 82\% on $P_n$ is very unlikely to be tight.

On the empirical side, \textcite{mertens05} extended smaller experiments performed previously by running extensive Monte Carlo simulations on random {\sc sr} instances sampled from different random graphs (not necessarily assuming complete preferences) to establish conjectures for complete graphs, grids, and Erdös-Rényi random graphs. On complete graphs with preferences sampled uniformly at random, the results suggest that $P_n \simeq e\sqrt{\frac{2}{\pi}} n^{-\frac{1}{4}}$, where the algebraic decay has strong support from the numerical simulations and the constant is the result of numerical fitting. On grids, the results suggest $P_n=\Theta(q^n)$ where $q<1$ depends on the dimension $d$ of the lattice and the range $r$ of the neighbourhood. Finally, on Erdös-Rényi random graphs with $n$ vertices and edge probability $p$, the results suggest that $P_n \simeq e \sqrt{\frac{2}{\pi}}n^{-\frac{1}{4}}$ which is asymptotically independent of $p$. The authors conclude that the existence of a stable matching depends on the existence of short cycles (in a respective stable partition) of low-degree vertices which are very likely in grids, leading to the exponential decay in $P_n$.

These results were strengthened ten years later by \textcite{mertens15random}. Although the author mainly argued how a simple modification to Irving's original algorithm for {\sc sr} leads to an average-case sub-linear space and time complexity of $O(n^{\frac{3}{2}})$ for an input of size $\Theta(n^2)$, his experiments also involved much larger uniformly random instances than previously studied. The numerical data still supported the previously conjectured algebraic decay of $P_n=\Theta(n^{-\frac{1}{4}})$ on complete graphs, adjusting the constant by roughly 3\% for large instances through least squares fit. Overall, the results still suggest that $\lim_{n\rightarrow\infty} P_n=0$. 

In a different paper, \textcite{mertens15small} built on the ideas presented by \textcite{pittel93problem} and presented new formulae for the exact computation of $P_n$. He also presented exact results for $P_n$ with the number of agents $n<13$ and concluded that $P_n$ is much smaller for $n$ odd, where an instance with an odd number of agents is considered stable if it contains one fixed point, i.e. one unmatched agent, and the remaining agents form a stable matching. However, the author also noted that although his computation is much more efficient than exhaustive enumeration, it is not a feasible method for computing much larger values of $P_n$ exactly to rigorously determine its limiting behaviour.

In a recent report, \textcite{openproblems19} briefly summarised the advances on the solvability problem and confirmed that there is not enough insight yet to determine the ultimate behaviour of $P_n$ as $n$ grows large and that new insights are likely infeasible without a new approach.

The study of random instances has focused almost exclusively on instances sampled uniformly at random. However, \textcite{boehmer2024map,diversesynthetic23} studied the characteristics of and relationships between randomly generated instances sampled from different statistical cultures. The authors introduced a mutual attraction distance to measure the similarities between random {\sc sr} instances and four extreme cases of preference lists (according to their distance measure). They then defined ten different statistical cultures, one of them being preferences sampled uniformly at random, and plotted 460 instances with 200 agents each based on their mutual attraction distance against the four extrema. The plots clearly show that instances from the same culture have very similar properties, however, these properties can vary greatly from instances sampled from other cultures, especially concerning solvability.

\subsection{Stable Partitions}
\label{sec:sp}

As previously noted, \textcite{gusfield89} asked whether it is possible to provide a succinct certificate for the unsolvability of an {\sc sr} instance. The question was answered positively by \textcite{tan91_1}, who introduced the combinatorial structure known as a \emph{stable partition}, although the structure's significance extends beyond its use as a witness of unsolvability.

\begin{definition}[Stable Partition]
    Let $I=(A,\succ)$ be an {\sc sr} instance. Then a partition $\Pi$ is \emph{stable} if it is a permutation of $A$ and  
    \begin{enumerate}
        \item[(T1)] $\forall a_i \in A$ we have $\Pi(a_i) \succeq_i \Pi^{-1}(a_i)$, and
        \item[(T2)] $\nexists.a_i, a_j \in A, \; a_i\neq a_j,$ such that $a_j \succ_i \Pi^{-1}(a_i)$ and $a_i \succ_j \Pi^{-1}(a_j)$,
    \end{enumerate}
    where $\Pi(a_i) \succeq_i \Pi^{-1}(a_i)$ means that either $a_i$'s successor in $\Pi$ is equal to its predecessor, or the successor has a better rank than the predecessor in the preference list of $a_i$.
\end{definition}

Over 20 years after their initial publication, \textcite{matchup} referred to the work on stable partitions in the 1990s as a key landmark in the progress made on the {\sc sr} problem after 1989. 

Notice that the definition of a stable partitions is a relaxation of that of stable matchings: if we changed T1 to ``$\forall a_i \in A$ we have $\Pi(a_i) = \Pi^{-1}(a_i)$'' then we would recover permutations equivalent to stable matchings (which might not exist and could including a fixed-point when $n$ is odd). Being permutations, stable partitions can be written as an unordered collection of disjoint ordered cycles. For clarity, we distinguish between different cycles as follows.

\begin{definition}
    A cycle $C$ of length $k$ is referred to as a \emph{$k$-cycle}. If $k$ is odd, then we refer to $C$ as an \emph{odd cycle}; otherwise we refer to $C$ as an \emph{even cycle}. Furthermore, $C$ is called \emph{reduced} if $k$ is either 2 or odd, otherwise, it is called \emph{non-reduced}. Similarly, a stable partition $\Pi$ is \emph{reduced} if it consists only of reduced cycles, and \emph{non-reduced} if not.
\end{definition}

\textcite{tan91_1,tan91_2} showed that any {\sc sr} instance admits at least one reduced stable partition (in stark contrast to stable matchings) and established the following properties.

\begin{theorem}[\cite{tan91_1, tan91_2}]
\label{thm:tan91}
    The following properties hold for any {\sc sr} instance $I$.
    \begin{itemize}
        \item Any two stable partitions $\Pi_a, \Pi_b$ of $I$ contain exactly the same cycles of odd length.
        \item $I$ admits a stable matching if and only if no stable partition of $I$ contains a cycle of odd length at least 3.
        \item Given a stable partition $\Pi$, any cycle of even length longer than 2 can be broken down into a collection of cycles of length 2 to achieve a reduced stable partition $\Pi'$.
    \end{itemize}
\end{theorem}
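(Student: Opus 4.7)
The plan is to prove the three bullets in the order (third, first, second), since the second follows from the first and third, and the third provides the constructive tool. For the third bullet, let $\Pi$ be a stable partition containing an even cycle $C = (a_0\; a_1\; \dots\; a_{2k-1})$ with $2k \geq 4$. Property (T1) gives $a_{i+1} \succeq_i a_{i-1}$ for every $i$, and since $a_{i+1} \neq a_{i-1}$ when $2k \geq 4$, this strengthens to the strict inequality $a_{i+1} \succ_i a_{i-1}$. I would form $\Pi'$ from $\Pi$ by replacing $C$ with the $k$ transpositions $(a_0\; a_1)(a_2\; a_3)\cdots(a_{2k-2}\; a_{2k-1})$, leaving the other cycles of $\Pi$ untouched. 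Condition (T1) for $\Pi'$ is automatic, since in each 2-cycle the successor equals the predecessor. For (T2), observe that for even-indexed $a_i$ the predecessor in $\Pi'$ is $a_{i+1}$, which is strictly better than the predecessor $a_{i-1}$ in $\Pi$, while for odd-indexed $a_i$ and for agents outside $C$ the predecessor is unchanged; hence any candidate blocking pair of $\Pi'$ would also satisfy the (T2)-hypotheses for $\Pi$, contradicting the stability of $\Pi$.

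For the first bullet, let $\Pi_a, \Pi_b$ be stable partitions of $I$. The crux is to rule out the ``swap'' configuration in which some agent lies in an odd cycle of $\Pi_a$ but not in any odd cycle of $\Pi_b$. Suppose such an agent $a$ lies in an odd cycle $C = (x_0\; x_1\; \dots\; x_{2\ell})$ of $\Pi_a$. I would trace $C$ one step at a time, using the strict inequality $x_{i+1} \succ_{x_i} x_{i-1}$ (available inside any odd cycle of $\Pi_a$) together with property (T2) applied to $\Pi_b$ at each $x_i$, to propagate a rank comparison between $x_i$ and its $\Pi_b$-partner around $C$. The odd length of $C$ forces this propagation to close up with the wrong parity, producing an agent that both strictly prefers and is strictly preferred by the same partner, a contradiction. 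Applying the same argument with $\Pi_a$ and $\Pi_b$ interchanged, the set of agents lying in odd cycles is common to both partitions; a further short argument comparing $\Pi_a$ and $\Pi_b$ restricted to this set --- again via (T2) --- forces the two restrictions to agree cycle-for-cycle.

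Finally, the second bullet follows. Given a stable matching $M$, the permutation consisting of the 2-cycles $\{a, M(a)\}$ together with a 1-cycle at the unmatched agent (if $n$ is odd) is itself a stable partition containing only 2-cycles and at most one 1-cycle, so by the first bullet no stable partition of $I$ can contain an odd cycle of length $\geq 3$. Conversely, if some stable partition $\Pi$ contains no odd cycle of length $\geq 3$, then every cycle of $\Pi$ is a 1-cycle, a 2-cycle, or an even cycle of length $\geq 4$; applying the third bullet to every long even cycle yields a reduced stable partition $\Pi'$ whose 2-cycles form a stable matching of $I$. The main obstacle will be the invariance argument of the first bullet: (T2) is a local, two-agent constraint, whereas the parity contradiction is a global statement about an entire cycle, so the chain-propagation bookkeeping has to be done carefully, essentially along the lines of Tan's original pair-tracking invariant.
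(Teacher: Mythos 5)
This theorem is not proved in the paper at all: it is imported from Tan's 1991 papers via the citation, so there is no in-text argument to compare against and your proposal has to stand entirely on its own. Your proof of the third bullet is complete and correct: the strictness $a_{i+1}\succ_i a_{i-1}$ from (T1) on a cycle of length at least $4$, the observation that passing to the pairing $(a_0\,a_1)(a_2\,a_3)\cdots(a_{2k-2}\,a_{2k-1})$ preserves or strictly improves every agent's predecessor, and the resulting transfer of any (T2)-violation of $\Pi'$ back to $\Pi$ are all sound. The derivation of the second bullet from the first and third is likewise fine, modulo two small observations you should make explicit: a stable partition contains at most one $1$-cycle (two fixed points would violate (T2) under complete lists), and the transpositions of a reduced stable partition with no odd cycle of length $\geq 3$ form a stable matching precisely because a blocking pair of that matching is a (T2)-violation of the partition.

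The genuine gap is the first bullet, which is the substantive content of Tan's theorem and which you only sketch. Saying that you will ``propagate a rank comparison between $x_i$ and its $\Pi_b$-partner around $C$'' and that ``the odd length of $C$ forces this propagation to close up with the wrong parity'' names the right phenomenon but does not constitute a proof: you never state the invariant being propagated (something of the form: if $\Pi_b^{-1}(x)\succ_x \Pi_a^{-1}(x)$, then $y=\Pi_b^{-1}(x)$ satisfies $\Pi_a^{-1}(y)\succeq_y \Pi_b^{-1}(y)$, obtained by playing (T2) for $\Pi_a$ against (T1) for $\Pi_b$), nor do you explain along which permutation the chain travels, how the cycles of the two partitions interleave, or why an odd cycle forces the parity clash while an even one does not. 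The stronger half of the bullet --- that the odd cycles agree as cyclic sequences, not merely as sets of agents --- is dismissed as ``a further short argument,'' but that is where most of the bookkeeping actually lives. Your sketch also silently assumes the odd cycle has length at least $3$: the strict inequality $x_{i+1}\succ_{x_i}x_{i-1}$ fails for a fixed point, so the invariance of the $1$-cycle needs separate handling. Since your second bullet is deduced from the first, this gap propagates, and as written only the third bullet is actually established.
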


In his original paper, \textcite{tan91_1} provided a linear-time algorithm similar to Irving's algorithm to compute a stable partition, referred to as \emph{Tan's algorithm}. \textcite{tanhsueh} considered the online version of the problem of finding a stable partition, in which a new agent arrives and the preference lists are updated, and constructed an exact algorithm known as the \emph{Tan-Hsueh algorithm} that runs in linear time (for each newly arriving agent).

Coming back to solvability and the key question concerning $P_n$, \textcite{pittel93instance} derived a range of relevant probabilistic results relating to the original algorithm by Tan \cite{tan91_1} when assuming preferences sampled uniformly at random. Specifically, he showed, for example, that every stable partition is likely to be almost a stable matching, in the sense that at most $O(\sqrt{(n\log n)})$ members are likely to be involved in odd cycles of length 3 or more. He also showed that the expected number of stable partitions is $O(\sqrt{n})$. 

Experimentally, \textcite{mertens05} analyzed the total number of elements in cycles of odd length, denoted by $n_{odd}$, of unsolvable instances sampled from complete uniform random graphs and conjectured that the expected total number is $\Theta\left (\sqrt{\frac{n}{\log n}}\right )$, with the numerical constant numerically estimated to be around 2.375. Furthermore, on the experimental side, \textcite{mertens15small} presented the exact probabilities for specific cycle types for fixed instance sizes through exhaustive enumeration. For example, the probability that an instance with 10 agents admits a stable partition consisting of five 2-cycles, denoted by $P([2^5])$, is 0.0013. Some other combinations are also explored, such as the probability of a stable partition consisting of one 1-cycle, three 2-cycles, and one 3-cycle $P([1^1, 2^3, 3^1])=0.0000$. However, Mertens did not provide any results on the probabilities of fixed odd-length cycles in isolation; for example, what is the probability of an instance of size $n$ admitting a stable partition with an invariant cycle of (odd) length $x$?

Some intuition for this question is given in recent work by \textcite{pittel19}, extending previous work \cite{pittel93instance} in a similar, deeply probabilistic and algorithmic investigation. Here, he studied stable partitions where preferences are chosen uniformly at random and proved that the expected total number of reduced stable partitions grows in the order of $n^{\frac{1}{4}}$ which extends the previous $O(\sqrt{n})$ bound on the expected number of stable partitions. Pittel also showed that the expected total number of odd length cycles grows in the order of at most $n^{\frac{1}{4}}\log n$. This is an interesting contrast to the previously proven result that the expected number of stable matchings approaches $e^{\frac{1}{2}}$. Furthermore, Pittel showed that with super-polynomially high probability, $n_{odd}$ (the number of agents contained in cycles of odd length) is below $\sqrt{n}\log n$ and that the probability of a stable partition having a \emph{fixed-point} (1-cycle) is bounded by $O(n^2e^{-\sqrt{n}})$. However, all of these results are derived for instances with instances chosen uniformly at random and $n$ even, so a natural question is whether these results also hold when sampling the preferences from other statistical cultures or when $n$ is odd.

Recently, \textcite{glitznersagt24, glitzner2024structuralalgorithmicresultsstable} established new algorithmic and structural results for stable partitions and showed, for example, how to efficiently enumerate all stable partitions and various cycle types found within them. The authors established that although an {\sc sr} instance $I$ with $n$ agents can admit exponentially many stable matchings (and thus stable partitions), it admits at most $O(n^2)$ stable cycles and these can be enumerated in $O(n^4)$ time. Furthermore, the authors showed that reduced stable partitions stand in a bijective correspondence with the stable matchings of a smaller solvable sub-instances $I_T$ of $I$ such that the reduced stable partitions of $I$ can be enumerated at least as efficiently as the stable matchings of $I_T$, while all stable partitions of $I$ can be enumerated in an asymptotic factor of $n^2$ slower. In the paper, the authors also adapted optimality criteria from stable matchings to stable partitions and gave complexity and approximability results for the problems of computing such ``fair'' and ``optimal'' stable partitions, establishing that several of these problems are NP-hard, as is the case in the corresponding problems in the stable matching setting.

\subsection{Dealing with Unsolvable Instances}
\label{sec:dealing}

From the previous subsection, it is clear that as $n$ grows large, stable matchings are unlikely to be a consistent solution concept in practice as they are unlikely to exist. In the past, many alternative problems and solutions have been proposed, such as stable partitions \cite{tan91_1}, maximum stable matchings \cite{tan91_2}, almost stable matchings \cite{abraham06}, and many more. We will highlight a few of them, referring to \textcite{matchup} for more details.

\paragraph{Maximum Stable Matchings and Stable Half-Matchings} \textcite{tan91_2} introduced the notion of a \emph{maximum stable matching}, a matching of maximum size such that no pair of agents both having a partner in the matching are blocking. Given a stable partition, finding such a maximum stable matching becomes simple and solvable in $O(n^2)$ time (for an instance with $n$ agents). To compute a maximum stable matching, one can simply pick an arbitrary agent from each odd cycle, delete it from the instance, and decompose the remaining even-length cycle into transpositions. However, this might not be a good solution in practice, as up to a third of the agents could remain unmatched, even when all preference lists are complete. Interestingly, \textcite{manipulation24} recently proved that when only agents from a specific set can be removed (rather than any agent from the instance), computing a maximum stable matching is NP-hard.

It has also been shown that each stable partition corresponds to a stable half-matching, i.e., a half-integral fractional matching. Again, a given stable partition can be easily converted into a stable half-matching -- we can simply match every agent one half-unit to their predecessor and one half-unit to their successor (resulting in a full match if and only if the agent is in a transpositions). Stable half-matchings have a variety of practical motivations, for example in sports scheduling, and have been studied in various matching models \cite{biro08,birofleiner15}.

\paragraph{Almost Stable Matchings} As another natural way to deal with unsolvable instances, \textcite{abraham06} introduced the problem of finding \emph{almost stable} matchings, which are matchings with the minimum number of blocking pairs. Precisely, let $I$ be an {\sc sr} instance, $M$ be a matching of the agents in $I$, and $bp(M)$ the number of blocking pairs admitted by $M$. Now let $bp(I)$ be the minimum value of $bp(M)$, taken over all matchings $M$ in $I$. We define {\sc Min-BP-SR} to be the problem of deciding whether $bp(I)\leq k$, for a given {\sc sr} instance $I$ and an integer $k$, and {\sc Almost Stable Matching} as the problem of finding a matching $M$ (an \emph{almost-stable matching}) with $bp(M)=bp(I)$. \textcite{abraham06} proved that {\sc Min-BP-SR} is NP-complete and that the associated optimisation variant is NP-hard and not approximable within $n^{\frac{1}{2}-\varepsilon}$ for any $\varepsilon>0$, unless P$=$NP. However, for a fixed number $k$ of blocking pairs, the authors provide an exact $O(m^{k+1})$ algorithm to find a matching $M$ where $bp(M) \leq k$ or report that none exists, for $m=O(n^2)$ mutually acceptable pairs of agents (in our setting $m=\frac{n(n-1)}{2}$ due to complete preference lists).

Later, \textcite{biro12} extended the study of this problem to incomplete and bounded length preference lists, showing that for preference lists of length at most $d\geq 1$ and $m$ mutually acceptable pairs, the resulting restricted problem {\sc Min-BP-d-SRI} (where {\sc sri} indicates incomplete preference lists and $d$ is the maximum length of any preference list) of {\sc Min-BP-SR} is solvable in $O(m)$ time for $d=2$, but NP-hard and not approximable within $c$ for some $c>1$ unless P$=$NP for $d=3$. However, the authors provided a polynomial-time $(2d-3)$-approximation algorithm using the stable partition structure for $d\geq 3$ which improves to $(2d-4)$ in special cases.

\textcite{chen17} expanded on the NP-hardness and APX-hardness of {\sc Min-BP-SR} and proved that the problem parametrized by the number of blocking pairs is $W[1]$ hard, even if the preference lists are of length at most 5. Instead of minimising the number of blocking pairs, the number of blocking agents could be minimised instead. However, this does not make the problem any more tractable. \textcite{chen17} showed that deciding whether an {\sc sr} instance has a matching with at most $k$ blocking agents is NP-complete, and similarly, proved that the problem parametrized by the number of blocking agents is also $W[1]$ hard, even if the preference lists are of length at most 5. 


The progress on almost stable matchings in the {\sc Stable Roommates} problem was reviewed by Chen in 2019 \cite{chen2019computational}.

\section{Analysing Unsolvable Structures}
\label{sec:experiments}

Most existing results outlined in the previous section are either theoretical proofs of the asympotitic runtime and behaviour of classical algorithms, or experimental estimates of the solvability probability itself. Inspired by the work of \textcite{diversesynthetic23} and motivated by the recent study of instance and stable partition structures \cite{glitznersagt24}, we aim to go deeper and investigate various properties of stable partitions and structures contained within them for different statistical cultures. With this, we hope to accelerate progress on open questions such as the approximability of the almost stable matching problem and the behaviour of $P_n$ in the limit \cite{pittelirving94}.

First, we will outline our experimental design (Section \ref{sec:design}), while the remainder of this section will mirror a ``zooming-in'' process: we will start by investigating how many instances we expect to admit stable matchings (Section \ref{sec:solv}), and then analyse how many stable partitions and reduced stable partitions instances admit (Section \ref{sec:numpartitions}). We will then see how much these structures vary between each other (Section \ref{sec:variability}) and how this compares to the case when we restrict attention to stable matchings (Section \ref{sec:nummatchings}). Finally, we will further investigate the types of cycles that likely make up our stable partitions (Section \ref{sec:oddcycles}). 

\subsection{Experimental Design}
\label{sec:design}

For our experiments, we generated random {\sc sr} instances based on different statistical cultures previously proposed by \textcite{boehmer2024map,diversesynthetic23}. We chose a wide-ranging subset of these cultures, with informal descriptions following, referring to the referenced paper for rigorous definitions:
\begin{itemize}
    \item \textbf{\textit{IC:}} Impartial culture, used synonymously with preferences generated uniformly at random.
    \item \textbf{\textit{2-IC:}} The set of agents is partitioned into two equally sized groups. Every agent prefers all agents from its group over all agents from the other group, but the preferences over the agents within the groups are generated uniformly at random. This could model left- versus right-wing political leanings, for example.
    \item \textbf{\textit{Symmetric:}} Preferences agree, so  rank$_{j}(a_i)=$ rank$_{i}(a_j)$. This could model the phenomenon in which similarities attract.
    \item \textbf{\textit{Asymmetric:}} Preferences disagree, so rank$_{j}(a_i)=k$ implies rank$_{i}(a_j)=n-k$. This could model the phenomenon in which opposites attract.
    \item \textbf{\textit{Euclidean:}} Agents are randomly distributed in a (two-dimensional) Euclidean space and strict preferences are derived from their Euclidean distances (with arbitrary tiebreaking). This could model the phenomenon in which preferences are based on spatial distance.
    \item \textbf{\textit{Attributes:}} Weighted (two-dimensional) \textit{Euclidean} preferences where each agent has a random position and random weighting over the spatial dimensions. This could model the case where each dimension models some personality trait and the weight models personality-based preferences over the traits. 
    \item \textbf{\textit{Mallows-Euclidean:}} Randomly perturbed (two-dimensional) \textit{Euclidean} preferences\footnote{We used the default parameters \texttt{space='uniform', phi=0.5}.} (later also referred to as \textit{M-Euclidean}). We do not have an intuitive use case for this, but \textcite{diversesynthetic23} showed that this culture can act as an interesting extreme against the others mentioned above.
\end{itemize}

The controls in most of our experiments are whether there is an even or odd number of agents, whether the sample space is restricted to solvable, unsolvable, or all instances, and which statistical culture we sample the instances from. We generated instances of size (referring to the number of agents $n$) 2-201 (in steps of 1) and instances of size 300, 301, 400, 401, 500, 501, 5,000 and 5,001 agents. For each instance size, we generated 7,000 (seeded) random instances for each statistical culture over which the following results are averaged. For each instance, we computed a stable partition using a custom implementation of the Tan-Hsueh algorithm \cite{tanhsueh} and captured the numbers and lengths of cycles observed. Furthermore, for some instance sizes, we enumerated their reduced and non-reduced stable cycles and partitions using recently developed algorithms by \textcite{glitznersagt24}. 
All implementations were written in Python and all computations were performed on the {\tt fatanode} cluster.\footnote{See \href{https://ciaranm.github.io/fatanodes.html}{https://ciaranm.github.io/fatanodes.html} for technical specifications.}

\subsection{Solvability}
\label{sec:solv}

As mentioned in Section \ref{sec:solvback}, \textcite{mertens15small} already computed the exact probability $P_n$ that a random instance with $n$ agents is solvable for small $n$ both even and odd using exhaustive instance enumeration. The results are shown in Table \ref{table:mertensexact} and indicate that the decay of $P_n$ is much steeper for $n$ odd compared to $n$ even. The author suggests that this is likely because a fixed-point is an agent that would be happy to be matched to any other agent in the instance, which can be highly destabilizing.

\begin{table}[!htb]
    \centering
    \begin{tabular}{c c c c c c c c c}
        \toprule
        & \multicolumn{4}{c}{$n$ even} & \multicolumn{4}{c}{$n$ odd} \\
        \cmidrule(lr){2-5} \cmidrule(lr){6-9}
        & $n=2$ & $n=4$ & $n=6$ & $n=10$ & $n=3$ & $n=5$ & $n=7$ & $n=11$ \\
        \midrule
        $P_n$ (in \%) & 100 & 96.30 & 93.33 & 89.13 & 75.00 & 58.96 & 47.54 & 32.39 \\
        \bottomrule
    \end{tabular}
    \caption{Exact values of $P_n$ computed by \textcite{mertens15small}}
    \label{table:mertensexact}
\end{table}

Our numerical estimates of $P_n$ (i.e., the proportion of solvable instances in our dataset, denoted by $\hat{P}_n$) are shown in Figure \ref{fig:Pn_subplots} for different statistical cultures. Our results for \emph{IC} (shown in the top left of Figure \ref{fig:Pn_subplots}) match those known (for example by \textcite{mertens05, mertens15random}) closely. Furthermore, \textcite{mertens05} computed an estimate of $P_n$ for $n$ even through a function of best fit of the form $\hat{P_n} \simeq an^b$ with $a,b\in \mathbb{R}$ and found that $P_n \simeq e\sqrt{\frac{2}{\pi}} n^{-\frac{1}{4}}$. Using the same technique, we complement this result after fitting a function of best fit to our data for $n$ odd and estimate that $P_n \simeq e\sqrt{\frac{3}{\pi}}n^{-1}$. 

Although previous experiments have largely focused on instances with preferences chosen uniformly at random (the \emph{IC} culture) \cite{pittelirving94,mertens15random}, \textcite{diversesynthetic23} showed that there is a clear discrepancy between the results when choosing different statistical cultures. 

Our experiments suggested that instances with an even number of agents and preferences sampled from the \textit{Symmetric}, \textit{Asymmetric} and \textit{Euclidean} cultures always admit a stable matching. \textit{Euclidean} instances also always admit a stable matching when the number of agents is odd, whereas \textit{Asymmetric} instances never do. This makes sense -- in the \textit{Symmetric} case, every agent can be matched to their first choice (recall that if $a_i$ is $a_j$'s first choice, then $a_j$ must be $a_i$'s first choice). For \emph{Asymmetric}, we show that the following holds.

\begin{lemma}
\label{lemma:asymmetric}
    Let $I$ be an {\sc sr} instance with $n$ agents and asymmetric preferences (sampled from the \emph{Asymmetric} culture). If $n$ is even, then $I$ is solvable. If $n$ is odd, then $I$ is not solvable.
\end{lemma}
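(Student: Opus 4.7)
The plan is to handle the two parities separately, exploiting the identity $\mathrm{rank}_i(a_j) + \mathrm{rank}_j(a_i) = n$ that characterises the \emph{Asymmetric} culture. The key observation is that $n/2$ is an attainable rank precisely when $n$ is even, which gives a ``self-balancing'' pairing in that case and forces a parity obstruction in the other.

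For the even case I would construct an explicit stable matching. For each agent $a_i$, let $b(a_i)$ be the unique agent at rank $n/2$ in $a_i$'s preference list (well-defined because $n/2 \in \{1,\dots,n-1\}$). Using the asymmetric identity with $k = n/2$, I would first verify that $\mathrm{rank}_{b(a_i)}(a_i) = n/2$ as well, so $b$ is an involution; since no agent ranks itself, $b$ has no fixed points, and $M^{\ast} = \{\{a_i, b(a_i)\} : a_i \in A\}$ is a perfect matching. To confirm stability, suppose $\{a_i,a_j\}$ blocks $M^{\ast}$: then $\mathrm{rank}_i(a_j) < n/2$ and $\mathrm{rank}_j(a_i) < n/2$, contradicting $\mathrm{rank}_i(a_j) + \mathrm{rank}_j(a_i) = n$.

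For the odd case I would argue by contradiction via a double-counting argument. Suppose a stable matching $M$ exists. Because preferences are complete, no two agents can both be unmatched (they would form a blocking pair), so exactly one agent $a_u$ is unmatched. For every matched agent $a_j$, stability against $a_u$ requires $r_j < \mathrm{rank}_j(a_u)$, where $r_j := \mathrm{rank}_j(M(a_j))$. I would then sum both sides over the $n-1$ matched agents. The left-hand sum equals $\tfrac{n(n-1)}{2}$, because each matched pair $\{a_i,a_j\}$ contributes $r_i + r_j = n$ by the asymmetry identity, and there are $(n-1)/2$ pairs. For the right-hand sum, apply asymmetry termwise to get $\sum_{j \neq u}\mathrm{rank}_j(a_u) = \sum_{j \neq u}(n - \mathrm{rank}_u(a_j)) = n(n-1) - \sum_{k=1}^{n-1} k = \tfrac{n(n-1)}{2}$, since $a_u$'s list is a permutation of $\{1,\dots,n-1\}$. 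Strict inequality in every term is then incompatible with the two totals being equal, giving the desired contradiction.

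The main obstacle is conceptual rather than computational: one has to spot that the asymmetry identity forces the ``rank-to-partner'' sum and the ``rank-to-the-unmatched-agent'' sum to agree exactly, so there is no slack for a stable matching when $n$ is odd. Once this invariant is identified, the rest is bookkeeping. A minor subtlety to be careful about is ensuring the Asymmetric culture genuinely yields a consistent preference profile (so that $b$ is actually well-defined in the even case), which follows directly from the defining bijection $k \mapsto n-k$ on ranks.
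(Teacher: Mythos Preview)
Your even case is essentially the paper's argument: both match each agent to the unique agent at rank $n/2$ (the paper calls this the ``middle agent''), verify that this pairing is an involution, and rule out blocking pairs via the identity $\mathrm{rank}_i(a_j)+\mathrm{rank}_j(a_i)=n$.

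Your odd case, however, is a genuinely different route. The paper does not argue by contradiction; instead it explicitly constructs a stable partition $\Pi=(a_{i_1}\,a_{i_2}\cdots a_{i_n})$ consisting of a single $n$-cycle, where each agent's successor sits at rank $\tfrac{n-1}{2}$ and predecessor at rank $\tfrac{n+1}{2}$, verifies Tan's conditions T1 and T2, and then invokes Tan's theorem (odd cycle of length at least $3$ $\Rightarrow$ unsolvable). Your double-counting argument---forcing $\sum_j r_j$ and $\sum_j \mathrm{rank}_j(a_u)$ to coincide exactly while every termwise comparison is strict---is correct and more self-contained: it needs no stable-partition machinery at all. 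On the other hand, the paper's approach yields strictly more, namely the explicit stable partition and (via invariance of odd cycles) its uniqueness, which the paper actually reuses in the very next lemma. So your argument is cleaner for the bare solvability claim, while the paper's buys structural information it needs downstream.
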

\begin{proof}
    Recall that preference lists are of length $n-1$ (by complete preference lists).
    
    If $n$ is even, then each preference list is of odd length. Furthermore, for every agent $a_i$, there exists an agent $a_j$ in the middle of the preference list of $a_i$ (we will refer to them as the \emph{middle agent} of $a_i$, denoted by middle$(a_i)$) such that, by our asymmetric assumption, rank$_i(a_j)=$ rank$_j(a_i)$, and for all agents $a_k$ with rank$_i(a_k)<$ rank$_i(a_j)$ (i.e., $a_i$ prefers $a_k$ to $a_j$), we have that rank$_k($middle$(a_k))<$ rank$_k(a_i)$ ($a_k$ prefers their middle agent to $a_i$). Thus, if $n$ is even, every agent can be matched to their middle agent and no two agents can be blocking (otherwise, these two agents would prefer each other to their respective middle agents -- a contradiction). 

    If $n$ is odd, then there is no such middle agent (because $n-1$ is even). However, in this case, $I$ admits a unique stable partition $\Pi$ of $I$ containing just one odd cycle of length $n$. Specifically, consider $\Pi=(a_{i_1} \; a_{i_2} \dots a_{i_n})$ such that for each agent $a_{i_j}$, $a_{i_{j+1}}$ is chosen such that rank$_{i_j}(a_{i_{j+1}}) = \frac{n-1}{2}$ (all agent indices modulo $n+1$). Then, by asymmetric preferences, rank$_{i_{j+1}}(a_{i_{j}}) = \frac{n+1}{2}$ and $\Pi$ satisfies stability condition T1. To see that $\Pi$ also satisfies T2 (and is indeed a stable partition), consider two distinct agents $a_j,a_k$ that strictly prefer each other to their predecessors in $\Pi$. Then, by construction, rank$_j(a_k)<$ rank$_j(\Pi^{-1}(a_j))=\frac{n+1}{2}$ and rank$_k(a_j)<$ rank$_k(\Pi^{-1}(a_k))=\frac{n+1}{2}$, i.e., rank$_j(a_k)\leq \frac{n-1}{2}$ and rank$_k(a_j)\leq \frac{n-1}{2}$. However, then rank$_j(a_k)+$rank$_k(a_j)\leq n-1$, but by definition of asymmetric preferences, rank$_j(a_k)=n-$rank$_k(a_j)$, i.e., rank$_j(a_k)+$rank$_k(a_j)=n$, a contradiction. Thus, $\Pi$ is stable, and because it only consists of an odd-length cycle, by Theorem \ref{thm:tan91}, it is the unique stable partition of $I$. Hence, $I$ is unsolvable.   
\end{proof}

The \textit{Euclidean} case is similar, where agents can be matched increasingly by distance, therefore never producing a blocking pair (and leaving one agent unmatched at the end if $n$ is odd). This has been shown by \textcite{arkin09}. Note that for an odd number of agents, the \textit{\textit{Symmetric}} preference construction does not make sense (even for $n=3$, two agents must be each other's first choice, but then both of these agent must rank the third agent in position 2 of their preference list, thereby forcing the third agent to rank both of them in position 1, a contradiction).

More interesting are our results shown in Figure \ref{fig:Pn_subplots}. \textit{2-IC} shows a phenomenon where $\hat{P_n}$ seems to jump alternatingly between even values of $n$, where the upper peaks roughly match the behaviour from the impartial culture. We suspect that this is because the two sets that are impartially selected from (in order) are of size $\frac{n}{2}$, so when their size is odd, a cycle of odd length is more likely to occur, whereas potential cycles of even length caused by the partition can be broken up into transpositions. Finally, $\hat
{P}_n$ also decays for the \textit{Mallows-Euclidean} and \textit{Attributes} cultures, with the former decaying at a rate slightly faster than the latter. An interesting distinction between these two cultures compared to \emph{IC} and \emph{2-IC} is that there is no significant difference between our estimates for $n$ even and odd (for sufficiently large $n$).

\begin{figure}[!hbt]
    \centering

    \begin{subfigure}{0.49\textwidth}
        \begin{tikzpicture}
            \begin{axis}[
                width=\textwidth,
                height=5cm,
                ylabel={$\hat{P}_n$},
                ymin=0,       
                ymax=1,         
                xmin=0,
                xmax=201,
                grid=both,
                grid style={dashed, gray!30},
                cycle list name=color list,
                every axis plot/.append style={thick},
                title={\textit{IC}},
                title style={
                    yshift=-1.5ex  
                },
                legend cell align={left},
                legend style={
                    at={(1.28,1.35)}, 
                    anchor=north east,
                    column sep=1ex, 
                },
            ]
            \addplot[acmDarkBlue, mark=*] table [x=n, y=Pn] {data/solvability/icEven.txt};
            \addplot[acmGreen, mark=square*] table [x=n, y=Pn] {data/solvability/icOdd.txt};
            \addlegendentry{$n$ even}
            \addlegendentry{$n$ odd}
            \end{axis}
        \end{tikzpicture}
    \end{subfigure}
    \hfill
    \begin{subfigure}{0.49\textwidth}
        \begin{tikzpicture}
            \begin{axis}[
                width=\textwidth,
                height=5cm,
                ymin=0,       
                ymax=1,         
                xmin=0,
                xmax=201,
                grid=both,
                grid style={dashed, gray!30},
                cycle list name=color list,
                every axis plot/.append style={thick},
                title={\textit{2-IC}},
                title style={
                    yshift=-1.5ex  
                },
            ]
            \addplot[acmDarkBlue, mark=*] table [x=n, y=Pn] {data/solvability/2icEven.txt};
            \addplot[acmGreen, mark=square*] table [x=n, y=Pn] {data/solvability/2icOdd.txt};
            \end{axis}
        \end{tikzpicture}
    \end{subfigure}

    \vspace{0.1cm}

    \begin{subfigure}{0.49\textwidth}
        \begin{tikzpicture}
            \begin{axis}[
                width=\textwidth,
                height=5.5cm,
                xlabel={Number of Agents ($n$)},
                ylabel={$\hat{P}_n$},
                ymin=0,       
                ymax=1,         
                xmin=0,
                xmax=201,
                grid=both,
                grid style={dashed, gray!30},
                cycle list name=color list,
                every axis plot/.append style={thick},
                title={\textit{Attributes}},
                title style={
                    yshift=-1.5ex  
                },
            ]
            \addplot[acmDarkBlue, mark=*] table [x=n, y=Pn] {data/solvability/attributesEven.txt};
            \addplot[acmGreen, mark=square*] table [x=n, y=Pn] {data/solvability/attributesOdd.txt};
            \end{axis}
        \end{tikzpicture}
    \end{subfigure}
    \hfill
    \begin{subfigure}{0.49\textwidth}
        \begin{tikzpicture}
            \begin{axis}[
                width=\textwidth,
                height=5.5cm,
                xlabel={Number of Agents ($n$)},
                ymin=0,       
                ymax=1,         
                xmin=0,
                xmax=201,
                grid=both,
                grid style={dashed, gray!30},
                cycle list name=color list,
                every axis plot/.append style={thick},
                title={\textit{M-Euclidean}},
                title style={
                    yshift=-1.5ex  
                },
            ]
            \addplot[acmDarkBlue, mark=*] table [x=n, y=Pn] {data/solvability/MEuclEven.txt};
            \addplot[acmGreen, mark=square*] table [x=n, y=Pn] {data/solvability/MEuclOdd.txt};
            \end{axis}
        \end{tikzpicture}
    \end{subfigure}

    \caption{Estimates of $P_n$ by statistical culture}
    \label{fig:Pn_subplots}
\end{figure}
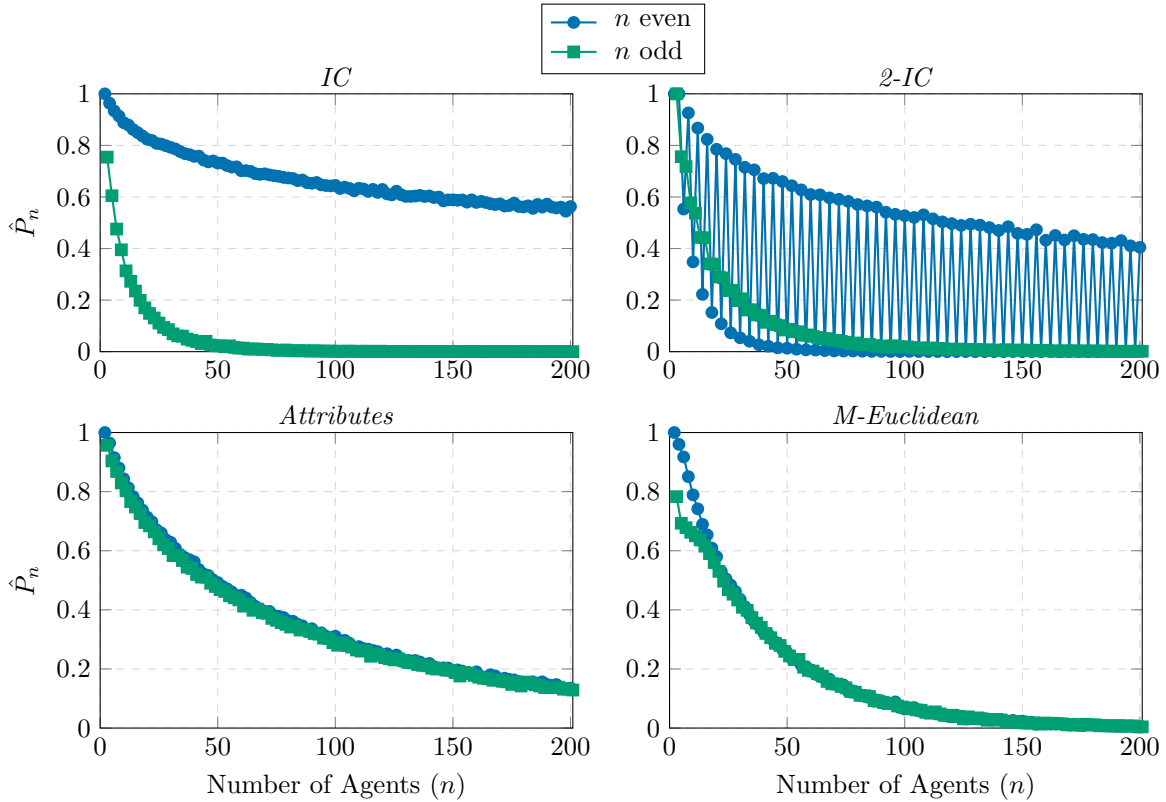

Overall, these results extend some preliminary observations by \textcite{boehmer2024map,diversesynthetic23}, but provide much more detailed observations about the solvability of instances sampled from these cultures in different settings and more rigorous estimates for the respective behaviour of $P_n$.

\subsection{Number of Stable Partitions}
\label{sec:numpartitions}

Now that we have seen that there is a wide range of estimates for $P_n$ depending on the underlying statistical culture, we relax our experiments from investigating only at the existence of stable matchings to studying different types of stable partitions (rather than stable matchings) the instances admit instead. Recall that if an instance is solvable (i.e., admits at least one stable matching) then its stable matchings (denoted by $\mathcal M$) are equivalent to its reduced stable partitions (denoted by $\mathcal{RP}$) (in this case containing only cycles of length 2 and, for $n$ odd, one fixed-point). \textcite{glitznersagt24} extended this result and showed that even for unsolvable instances, the reduced stable partitions stand in a bijective correspondence with the set of stable matchings of an underlying solvable sub-instance. Note that any instance admits at least one reduced stable partition (this follows from Theorem \ref{thm:tan91}) and that the set of all stable partitions (denoted by $\mathcal P$) contains all reduced stable partitions (and potentially some non-reduced stable partitions). Given that $\mathcal M\subseteq \mathcal{RP}\subseteq \mathcal P$, it follows that $\vert \mathcal M\vert\leq \vert \mathcal{RP}\vert \leq \vert \mathcal P\vert$. Also, note that it is easy to show that even $\mathcal M$ can be exponentially large (in the number of agents) \cite{gusfield89}. 

It is natural to ask how many reduced and non-reduced stable partitions our random instances admit, and how this varies between statistical cultures, as this impacts many search and optimisation problems related to stable solutions. Tables \ref{table:numstructureseven}-\ref{table:numstructuresodd} show the average number of stable partitions and reduced stable partitions for different cultures and for even and odd numbers of agents. Note that the tables and the following results exclude 18 (out of $>$196,000) \emph{Attributes} and 43 (out of $>$196,000) \emph{Mallows-Euclidean} instances where the enumeration timed out after 20h of computation time.

The most obvious observation is that $\vert \mathcal P\vert = 1$ for \emph{Symmetric, Asymmetric} and \emph{Euclidean} (no matter whether $n$ is even or odd), so we always observed a unique stable partition which is also reduced. \textcite{arkin09} already proved theoretically that \emph{Euclidean} instances admit unique stable matchings, and the observation of \emph{Symmetric} and \emph{Asymmetric} instances are consistent with what we previously observed and argued in Section \ref{sec:solv}. 

\begin{lemma}
    Let $I$ be an {\sc sr} instance with an even number of agents $n$ and \emph{Symmetric} preferences, then $I$ admits a unique stable partition.
\end{lemma}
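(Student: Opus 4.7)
The plan is to exploit the strong structural constraint that symmetric preferences impose via \emph{mutual first choices}. For any agent $a_i$, let $f(a_i)$ denote $a_i$'s first choice. By the symmetric assumption, rank$_i(f(a_i))=1$ implies rank$_{f(a_i)}(a_i)=1$, so $f(f(a_i))=a_i$; thus $f$ is a fixed-point-free involution on $A$. Since $n$ is even, $f$ induces a unique perfect pairing of $A$ into $n/2$ mutual first-choice pairs, and I would like to show that this pairing, viewed as a product of transpositions, is the one and only stable partition of $I$.

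The key step is to argue that in any stable partition $\Pi$, each mutual first-choice pair $\{a_i, a_j\}$ with $a_j = f(a_i)$ must appear as a 2-cycle. I would apply T2 to such a pair: suppose, for contradiction, that neither $\Pi^{-1}(a_i) = a_j$ nor $\Pi^{-1}(a_j) = a_i$. Since $a_j$ is $a_i$'s top choice, $a_j \succ_i \Pi^{-1}(a_i)$; by symmetry (in the statement of T2), $a_i \succ_j \Pi^{-1}(a_j)$. This directly violates T2. Hence, WLOG $\Pi^{-1}(a_i) = a_j$. Now T1 gives $\Pi(a_i) \succeq_i a_j$, and since $a_j$ is $a_i$'s top choice, no agent is strictly preferred to it, forcing $\Pi(a_i) = a_j$. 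Therefore $(a_i \; a_j)$ is a 2-cycle of $\Pi$.

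Because the $n/2$ mutual first-choice pairs partition $A$, every agent's predecessor and successor in $\Pi$ are pinned down, and $\Pi$ is uniquely the permutation consisting of exactly these $n/2$ transpositions. This same partition is trivially stable (every agent is matched to their first choice), so existence and uniqueness both follow. There is no serious obstacle in the argument; the only subtlety is in isolating T2 as the right invariant to apply to a potentially-blocking mutual first-choice pair, after which T1 collapses to equality for free because nothing can outrank a top choice.
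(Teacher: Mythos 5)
Your proof is correct, but it takes a different route from the paper's. The paper argues at the level of stable \emph{matchings}: it exhibits the mutual-first-choice matching $M$, notes it is perfect and stable, observes that any other matching leaves some mutual first-choice pair blocking, and then invokes the correspondence with stable partitions (Theorem 1) to conclude. You instead work directly with the stable partition axioms T1 and T2: you show $f$ (the first-choice map) is a fixed-point-free involution, use T2 to force that in any stable partition $\Pi$ at least one of $\Pi^{-1}(a_i)=f(a_i)$ or $\Pi^{-1}(f(a_i))=a_i$ holds for each mutual pair, and then use T1 plus the fact that nothing outranks a first choice to collapse each such pair to a transposition, pinning down $\Pi$ completely. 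Your approach is slightly longer but buys something real: it establishes uniqueness among \emph{all} stable partitions directly, including non-reduced ones, whereas the paper's appeal to the matching--partition correspondence most immediately gives uniqueness of the reduced stable partition and leaves implicit the (easy) step ruling out longer even cycles. The paper's version is more economical because it reuses Theorem 1; yours is more self-contained and makes the ``nothing can outrank a top choice'' mechanism explicit at the level where the claim is actually stated.
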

\begin{proof}
    Consider the matching $M$ in which every agent is paired with their first choice. $M$ is a perfect matching by assumption of symmetric preferences. Furthermore, it is stable because no agent can strictly prefer any other agent to their first preference list entry (which they are matched to). Indeed, for any other maching $M'\neq M$,  there would be at least two agents that are each other's first choices but are matched to someone worse (or remain unmatched). Hence, $M'$ could not be stable and $M$ is unique. By correspondence with stable partitions (Theorem \ref{thm:tan91}), $M$ corresponds to a reduced stable partition and is unique.
\end{proof}

An interesting observation about \emph{Asymmetric} instances is the following.

\begin{lemma}
    Let $I$ be an {\sc sr} instance with $n$ agents and \emph{Asymmetric} preferences. If $n$ is odd, then $I$ admits a unique stable partition, however, this does not necessarily hold for $n$ even.
\end{lemma}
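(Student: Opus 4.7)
The plan is to handle the two parities of $n$ separately. For $n$ odd, I would leverage the proof of Lemma~\ref{lemma:asymmetric} directly: that argument explicitly exhibits a stable partition $\Pi$ consisting of a single odd cycle of length $n$ covering all agents. By the first clause of Theorem~\ref{thm:tan91}, every stable partition of $I$ must contain exactly the same odd cycles as $\Pi$; since $\Pi$'s cycle already involves all $n$ agents, no further cycles of any length can appear. Hence $\Pi$ is the unique stable partition of $I$.

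For $n$ even, the plan is to exhibit an explicit counterexample. By Lemma~\ref{lemma:asymmetric}, the ``middle matching'' is always a stable matching for such instances, so it suffices to construct an asymmetric instance admitting a second, distinct stable matching; since stable matchings correspond to reduced stable partitions whose cycles are all of length~$2$, this immediately yields two distinct stable partitions. A brief case analysis rules out $n=4$: the rank-sum constraint $\mathrm{rank}_i(a_j) + \mathrm{rank}_j(a_i) = n$ essentially forces the preferences (up to relabeling) to a single ``cyclic'' instance whose only stable partition is the middle matching. I would therefore construct the counterexample at $n=6$: fix $a_1$'s preferences, then iteratively fill the remaining rows of the rank matrix so that each row is a permutation and the rank-sum constraint is preserved, while deliberately perturbing away from the purely cyclic structure (which itself turns out to admit only the middle matching). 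Verification is then routine: stability of the middle matching is automatic from Lemma~\ref{lemma:asymmetric}, and stability of the second matching is checked by inspecting the $\binom{6}{2} - 3 = 12$ unmatched pairs for blocking-pair violations.

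The main obstacle lies in the construction itself. The asymmetry constraint interacts strongly with the permutation structure of each preference list, so most naive perturbations of the cyclic instance either fail to extend to a full valid rank matrix or collapse back to a single stable matching. A targeted search through the limited degrees of freedom is needed (essentially, once $a_1$'s list and a compatible $a_2$'s list are fixed, the remaining entries are heavily constrained, and one must choose carefully to retain flexibility for a second stable matching). Once a suitable instance is located, the remainder of the proof reduces to a finite blocking-pair check that can be recorded in a short table.
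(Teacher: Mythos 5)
Your approach is the same as the paper's on both counts: for $n$ odd you invoke the single $n$-cycle stable partition from Lemma~\ref{lemma:asymmetric} together with the invariance of odd cycles from Theorem~\ref{thm:tan91}, which is exactly the paper's argument and is complete and correct. For $n$ even the paper likewise proceeds by exhibiting a $6$-agent asymmetric instance with two stable matchings; your plan correctly identifies that this is what is needed and that the verification is a finite blocking-pair check, but the proposal stops short of actually producing the instance, and that instance is the entire mathematical content of the even case --- ``a targeted search is needed'' is a promissory note, not a proof. Such an instance does exist: the paper uses the preference lists $a_1: a_6\,a_3\,a_2\,a_4\,a_5$; $a_2: a_4\,a_6\,a_1\,a_5\,a_3$; $a_3: a_2\,a_5\,a_6\,a_1\,a_4$; $a_4: a_3\,a_1\,a_5\,a_6\,a_2$; $a_5: a_1\,a_2\,a_4\,a_3\,a_6$; $a_6: a_5\,a_4\,a_3\,a_2\,a_1$, which admits both the middle matching $M_1=\{\{a_1,a_2\},\{a_3,a_6\},\{a_4,a_5\}\}$ and $M_2=\{\{a_1,a_3\},\{a_2,a_5\},\{a_4,a_6\}\}$. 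Supplying a concrete instance of this kind (and its blocking-pair table) is the one step you must actually carry out to close the proof; your side remark that $n=4$ cannot work is not needed for the lemma and would itself require justification if kept.
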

\begin{proof}
    We established in Lemma \ref{lemma:asymmetric} that if $n$ is odd, then there exists a stable partition $\Pi$ consisting of an $n$-cycle, so by Theorem \ref{thm:tan91}, it is the unique stable partition.

    For $n$ even, we showed that there is a stable partition $\Pi$ consisting only of transpositions of the middle choices. However, to see that this is not always unique, consider the instance shown in Example \ref{table:asymmetric} with $n=6$ agents and \emph{Asymmetric} preferences.

    \captionsetup[table]{name=Example}
    \begin{table}[!htb]
        \centering
        \begin{tabular}{ c | c c c c c}
            $a_1$ & $a_6$ & $a_3$ & $a_2$ & $a_4$ & $a_5$ \\
            $a_2$ & $a_4$ & $a_6$ & $a_1$ & $a_5$ & $a_3$ \\
            $a_3$ & $a_2$ & $a_5$ & $a_6$ & $a_1$ & $a_4$ \\
            $a_4$ & $a_3$ & $a_1$ & $a_5$ & $a_6$ & $a_2$ \\
            $a_5$ & $a_1$ & $a_2$ & $a_4$ & $a_3$ & $a_6$ \\
            $a_6$ & $a_5$ & $a_4$ & $a_3$ & $a_2$ & $a_1$
        \end{tabular}
        \caption{An instance with asymmetric preferences}
        \label{table:asymmetric}
    \end{table}
    \captionsetup[table]{name=Table}
    
    The instance admits the stable matching consisting of middle-choice pairs, i.e., 
    $$M_1=\{\{a_1,a_2\},\{a_3,a_6\},\{a_4,a_5\}\},$$ 
    but it also admits the stable matching 
    $$M_2=\{\{a_1,a_3\},\{a_2,a_5\},\{a_4,a_6\}\}.$$
\end{proof}

For \textit{IC}, it is interesting to see that the expected number of stable partitions remains small, averaging fewer than 7 stable partitions and fewer than 3 reduced stable partitions even for $n=500$. Recall that \textcite{pittel19} showed that for \emph{IC} instances with $n$ even, with high probablity, the number of reduced stable partitions grows in the order of $O(\sqrt{n})$. However, the maximum number of stable partitions observed is 513 for an instance of size $n=500$, whereas the highest observed number of reduced stable partitions for any \emph{IC} instance is 65, admitted by an instance with 300 agents. Thus, although there appear to be only very few stable partitions on average, outlier instances that admit many more such solutions are present and observable. Note that the values for $n$ odd are just slightly below the values for $n$ even and this is probably caused by the increased likelihood of invariant odd cycles which leave less flexibility to admit many stable partitions. 

For \textit{2-IC}, the average number of stable partitions and reduced stable partitions is larger and grows slightly more quickly than for \emph{IC} (starting from $n=40$), although remaining at fewer than 27, even for $n=500$. We highlight an outlier at $n=30$ in the otherwise increasing sequences of values -- this is probably due to the fact that $\frac{30}{2}=15$ is odd and therefore there is an increased likelihood of odd cycles, similar to what we observed for \emph{2-IC} in Figure \ref{fig:Pn_subplots}. Another significant difference between \emph{IC} and \emph{2-IC} is that the maximum number of stable partitions observed in \emph{2-IC} instances is 4815 (for an instance with 401 agents), which also admits 300 reduced stable partitions; this is much higher than for \emph{IC}.

As can be seen in Tables \ref{table:numstructureseven} and \ref{table:numstructuresodd}, \emph{Attributes} and \emph{Mallows-Euclidean} admit fewer than 10 stable partitions on average for instances with at most 201 agents, but from thereon we can spot a fast incline, especially for \emph{Mallows-Euclidean} which admits more than 152 stable partitions on average for instances with 501 agents. 

Unsurprisingly, we observed outlier instances for both cultures that admit a very high number of stable partitions. Multiple \emph{Attributes} instances with $n$ between 300 and 501 that we observed admit $6,561$ stable partitions. While we do not have a theoretical explanation for this, it is interesting to note that $6,561=3^8$ and so it could be, for example, that the stable cycles contain 8 even-length cycles and their decompositions, as well as some (potentially none) other fixed cycles (such as cycles of odd length). Because there are 2 decompositions into stable collections of transpositions for each cycle of even length longer than 2, there are 3 choices for each such cycle whether to include the long cycle or one of the collections in the stable partition.

We also observed a \emph{Mallows-Euclidean} instance with 401 agents that admits $10,125=3^45^3$ stable partitions. Interestingly, while the highest number of reduced stable partitions overall was a \emph{Mallows-Euclidean} instance with 401 agents (admitting 432 such structures), \emph{Attributes} instances admitted no more than 256 reduced stable partitions. Similar to previous observations regarding the solvability probability, there is little difference between the observed averages for $n$ even and odd in these two cultures.

\begin{table}[!htb]
    \centering
    \small
    \begin{tabular}{c c c c c c c c c c c c c}
        \toprule
        & & \textbf{10} & \textbf{20} & \textbf{30} & \textbf{40} & \textbf{60} & \textbf{80} & \textbf{100} & \textbf{200} & \textbf{300} & \textbf{400} & \textbf{500} \\
        \midrule
        \multirow{7}{*}{$\vert \mathcal P \vert$} 
        & \textit{IC} & 1.77 & 2.36 & 2.63 & 2.92 & 3.23 & 3.54 & 3.94 & 5.05 & 5.69 & 6.31 & 6.72 \\
        & \textit{2-IC} & 1.05 & 3.07 & 1.50 & 5.33 & 7.05 & 8.79 & 10.18 & 14.72 & 18.17 & 23.74 & 26.87 \\
        & \textit{Symmetric} & 1.00 & 1.00 & 1.00 & 1.00 & 1.00 & 1.00 & 1.00 & 1.00 & 1.00 & 1.00 & 1.00 \\
        & \textit{Asymmetric} & 1.00 & 1.00 & 1.00 & 1.00 & 1.00 & 1.00 & 1.00 & 1.00 & 1.00 & 1.00 & 1.00 \\
        & \textit{Euclidean} & 1.00 & 1.00 & 1.00 & 1.00 & 1.00 & 1.00 & 1.00 & 1.00 & 1.00 & 1.00 & 1.00 \\
        & \textit{Attributes} & 1.05 & 1.16 & 1.23 & 1.32 & 1.53 & 1.89 & 2.13 & 4.77 & 10.78 & 24.82 & 47.38 \\
        & \textit{M-Euclidean} & 1.24 & 1.31 & 1.44 & 1.60 & 1.95 & 2.53 & 3.11 & 9.68 & 24.01 & 61.39 & 126.89 \\
        \midrule
        
        \multirow{7}{*}{$\vert \mathcal{RP} \vert$}
        & \textit{IC} & 1.38 & 1.64 & 1.75 & 1.86 & 1.97 & 2.07 & 2.17 & 2.46 & 2.61 & 2.74 & 2.83 \\
        & \textit{2-IC} & 1.03 & 1.88 & 1.23 & 2.62 & 3.09 & 3.49 & 3.76 & 4.61 & 5.15 & 5.79 & 6.17 \\
        & \textit{Symmetric} & 1.00 & 1.00 & 1.00 & 1.00 & 1.00 & 1.00 & 1.00 & 1.00 & 1.00 & 1.00 & 1.00 \\
        & \textit{Asymmetric} & 1.00 & 1.00 & 1.00 & 1.00 & 1.00 & 1.00 & 1.00 & 1.00 & 1.00 & 1.00 & 1.00 \\
        & \textit{Euclidean} & 1.00 & 1.00 & 1.00 & 1.00 & 1.00 & 1.00 & 1.00 & 1.00 & 1.00 & 1.00 & 1.00 \\
        & \textit{Attributes} & 1.03 & 1.08 & 1.11 & 1.15 & 1.24 & 1.38 & 1.48 & 2.23 & 3.46 & 5.27 & 7.99 \\
        & \textit{M-Euclidean} & 1.12 & 1.15 & 1.20 & 1.28 & 1.41 & 1.60 & 1.79 & 3.17 & 5.17 & 8.65 & 13.78 \\
        \bottomrule
    \end{tabular}
    \caption{Average number of stable partitions and reduced stable partitions for $n$ even}
    \label{table:numstructureseven}
\end{table}

\begin{table}[!htb]
    \centering
    \small
    \begin{tabular}{c c c c c c c c c c c c c}
        \toprule
        & & \textbf{11} & \textbf{21} & \textbf{31} & \textbf{41} & \textbf{61} & \textbf{81} & \textbf{101} & \textbf{201} & \textbf{301} & \textbf{401} & \textbf{501} \\
        \midrule
        \multirow{7}{*}{$\vert \mathcal P \vert$} 
        & \textit{IC} & 1.14 & 1.37 & 1.54 & 1.73 & 2.08 & 2.30 & 2.51 & 3.58 & 4.29 & 4.95 & 5.32 \\
        & \textit{2-IC} & 1.48 & 2.00 & 2.63 & 3.11 & 4.20 & 5.11 & 5.88 & 9.80 & 13.62 & 16.72 & 21.57 \\
        & \textit{Asymmetric} & 1.00 & 1.00 & 1.00 & 1.00 & 1.00 & 1.00 & 1.00 & 1.00 & 1.00 & 1.00 & 1.00 \\
        & \textit{Euclidean} & 1.00 & 1.00 & 1.00 & 1.00 & 1.00 & 1.00 & 1.00 & 1.00 & 1.00 & 1.00 & 1.00 \\
        & \textit{Attributes} & 1.07 & 1.14 & 1.25 & 1.34 & 1.57 & 1.84 & 2.17 & 5.41 & 12.42 & 23.61 & 50.08 \\
        & \textit{M-Euclidean} & 1.15 & 1.30 & 1.46 & 1.66 & 2.00 & 2.67 & 3.21 & 9.51 & 24.03 & 64.65 & 152.16 \\
        \midrule
        
        \multirow{7}{*}{$\vert \mathcal{RP} \vert$} 
        & \textit{IC} & 1.07 & 1.18 & 1.26 & 1.34 & 1.48 & 1.56 & 1.65 & 1.97 & 2.17 & 2.36 & 2.48 \\
        & \textit{2-IC} & 1.24 & 1.47 & 1.73 & 1.90 & 2.23 & 2.50 & 2.70 & 3.58 & 4.22 & 4.64 & 5.27 \\
        & \textit{Asymmetric} & 1.00 & 1.00 & 1.00 & 1.00 & 1.00 & 1.00 & 1.00 & 1.00 & 1.00 & 1.00 & 1.00 \\
        & \textit{Euclidean} & 1.00 & 1.00 & 1.00 & 1.00 & 1.00 & 1.00 & 1.00 & 1.00 & 1.00 & 1.00 & 1.00 \\
        & \textit{Attributes} & 1.03 & 1.07 & 1.12 & 1.16 & 1.26 & 1.37 & 1.49 & 2.33 & 3.58 & 5.33 & 8.10 \\
        & \textit{M-Euclidean} & 1.07 & 1.15 & 1.22 & 1.29 & 1.42 & 1.65 & 1.81 & 3.15 & 5.27 & 9.06 & 15.08 \\
        \bottomrule
    \end{tabular}
    \caption{Average number of stable partitions and reduced stable partitions for $n$ odd}
    \label{table:numstructuresodd}
\end{table}

Now to inspect the growth pattern of $\vert \mathcal{RP}\vert$ visually, Figure \ref{fig:rpgrowth_subplots} shows the average values plotted on a log scale against the number of agents for our different statistical cultures. As can be seen, the growth of $\vert \mathcal{RP}\vert$ actually slows on average for \emph{IC} and \emph{2-IC} instances as $n$ increases. On the other hand, the growth of $\vert \mathcal{RP}\vert$ for \emph{Attributes} and \emph{M-Euclidean} accelerates over time and the log-scale plot does suggest an exponential growth. 

\begin{figure}[!htb]
    \centering

    \begin{subfigure}{0.49\textwidth}
        \begin{tikzpicture}
            \begin{axis}[
                width=\textwidth,
                height=6cm,
                ymin=1,       
                ymax=16,         
                xmin=0,
                xmax=501,
                xlabel={$n$},
                ylabel={$\vert \mathcal{RP}\vert$ (log scale)},
                log ticks with fixed point,
                ymode=log,
                grid=both,
                grid style={dashed, gray!30},
                cycle list name=color list,
                every axis plot/.append style={thick},
                title={\textit{n} even},
                title style={
                    yshift=-1.5ex  
                },
                legend cell align={left},
                legend style={
                    at={(0,1)}, 
                    anchor=north west,
                    column sep=1ex, 
                },
            ]
                \addplot[acmDarkBlue, mark=*] table [x=n, y=RP] {data/RP/icEven.txt};
                \addplot[acmGreen, mark=square*] table [x=n, y=RP] {data/RP/2icEven.txt};
                \addplot[acmPink, mark=triangle*] table [x=n, y=RP] {data/RP/attributesEven.txt};
                \addplot[acmOrange, mark=diamond*] table [x=n, y=RP] {data/RP/MEuclEven.txt};
            \addlegendentry{\emph{IC}}
            \addlegendentry{\emph{2-IC}}
            \addlegendentry{\emph{Attributes}}
            \addlegendentry{\emph{M-Euclidean}}
            \end{axis}
        \end{tikzpicture}
    \end{subfigure}
    \hfill
    \begin{subfigure}{0.49\textwidth}
        \begin{tikzpicture}
            \begin{axis}[
                width=\textwidth,
                height=6cm,
                ymin=1,       
                ymax=16,         
                xmin=0,
                xmax=501,
                xlabel={$n$},
                log ticks with fixed point,
                ymode=log,
                grid=both,
                grid style={dashed, gray!30},
                cycle list name=color list,
                every axis plot/.append style={thick},
                title={\textit{n} odd},
                title style={
                    yshift=-1.5ex  
                },
            ]
                \addplot[acmDarkBlue, mark=*] table [x=n, y=RP] {data/RP/icOdd.txt};
                \addplot[acmGreen, mark=square*] table [x=n, y=RP] {data/RP/2icOdd.txt};
                \addplot[acmPink, mark=triangle*] table [x=n, y=RP] {data/RP/attributesOdd.txt};
                \addplot[acmOrange, mark=diamond*] table [x=n, y=RP] {data/RP/MEuclOdd.txt};
            \end{axis}
        \end{tikzpicture}
    \end{subfigure}

    \caption{Growth of $\vert \mathcal{RP}\vert$}
    \label{fig:rpgrowth_subplots}
\end{figure}
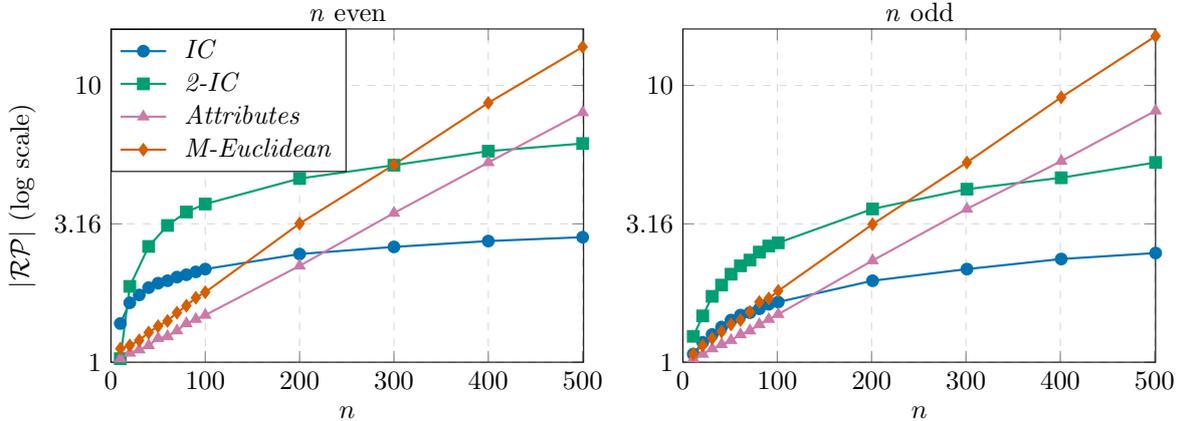

We conclude that, although the number of stable partitions grows large for some cultures and we could find outlier instances admitting a very large number of stable partitions, most instances only admit a small number of these structures. Furthermore, the number of reduced stable partitions remained small throughout all cultures and we did not observe any worst-case instances that admit any sort of exponential growth behaviour in this regard. This suggests that enumerating all reduced stable partitions of an instance (within the range of $n$ studied, i.e., $n\leq 501$) should be feasible in practice.

\subsection{Variability between Stable Partitions}
\label{sec:variability}

Naturally, with many instances admitting a large number of stable partitions, we could also expect the number of distinct cycles contained within them to vary a lot. However, we will see that this is not the case, although the even-length cycles of two stable partitions of an instance could be completely disjoint in general. However, \textcite{glitznersagt24} remarked that despite this, any instance with $n$ agents admits at most $O(n^2)$ different stable cycles, which we extend as follows.

\begin{theorem}
\label{thm:oi}
    Let $I$ be an {\sc sr} instance with $n\geq2$ agents and let $\mathcal O_I$ denote the odd cycles of any stable partition of $I$. If $I$ is solvable, then $\vert\mathcal{O}_I\vert=0$ (if $n$ is even) or $\vert\mathcal{O}_I\vert=1$ (if $n$ is odd). Otherwise, if $I$ is unsolvable, then $\vert\mathcal{O}_I\vert\geq 2$ (if $n$ is even) or $\vert\mathcal{O}_I\vert\geq1$ (if $n$ is odd), and $\vert\mathcal{O}_I\vert\leq\left\lfloor\frac{n}{3}\right\rfloor+((n$ mod $3)$ mod $2)$. Furthermore, these bounds are tight for every $n$.
\end{theorem}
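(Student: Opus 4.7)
The plan is to split the proof into the solvable values, the unsolvable lower bound (parity), the upper bound (length counting), and an explicit tightness construction. Theorem \ref{thm:tan91} already ensures $|\mathcal{O}_I|$ is well-defined, as all stable partitions of $I$ share the same odd cycles. If $I$ is solvable, then it admits a reduced stable partition consisting entirely of 2-cycles, with a single forced 1-cycle when $n$ is odd; hence $|\mathcal{O}_I|$ equals 0 or 1 accordingly. If $I$ is unsolvable, then Theorem \ref{thm:tan91} produces an odd cycle of length at least 3, so $|\mathcal{O}_I| \geq 1$. Since all cycle lengths sum to $n$ and even cycles contribute an even amount, the total length of odd cycles, and therefore $|\mathcal{O}_I|$ itself, has the same parity as $n$. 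Combining these two facts upgrades the lower bound to $|\mathcal{O}_I| \geq 2$ for $n$ even and $|\mathcal{O}_I| \geq 1$ for $n$ odd.

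For the upper bound, I would first argue that any stable partition contains at most one 1-cycle: two distinct fixed points $a_i, a_j$ satisfy $\Pi^{-1}(a_i) = a_i$ and $\Pi^{-1}(a_j) = a_j$, and since each agent strictly prefers every other agent to themselves, the pair $\{a_i, a_j\}$ immediately violates T2. Writing $k = |\mathcal{O}_I|$ and $f \in \{0,1\}$ for the number of 1-cycles, the remaining $k - f$ odd cycles have length at least 3, yielding $3k - 2f \leq n$. A short case analysis on $n \bmod 3$, taking into account the parity constraint $k \equiv n \pmod{2}$ and whether setting $f = 1$ leaves a feasible even-cycle remainder, collapses to the closed form $\lfloor n/3 \rfloor + ((n \bmod 3) \bmod 2)$: the extra unit appears precisely when $n \equiv 1 \pmod{3}$, where a single fixed point accompanies $\lfloor n/3 \rfloor$ cycles of length 3.

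For tightness I would use a block construction. Each 3-block $B_j = \{b_{j,1}, b_{j,2}, b_{j,3}\}$ has cyclic top preferences ($b_{j,i}$ prefers $b_{j,i+1}$ first and $b_{j,i+2}$ second, indices mod 3), and every in-block partner is ranked above every out-of-block agent. Concatenating $\lfloor n/3 \rfloor$ such blocks and, depending on $n \bmod 3$, appending either nothing, a single agent $a^*$ ranked last by every other agent, or a mutually-first-choice pair, I would exhibit the stable partition in which each block forms a 3-cycle and the tail (when present) forms a 1-cycle or 2-cycle as appropriate, and then verify T1 and T2 block by block. By the invariance of $\mathcal{O}_I$ from Theorem \ref{thm:tan91}, one explicit realization suffices. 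The hard part will be the global T2 check: ruling out cross-block and tail-involving blocking pairs. The key observation is that every agent's cycle predecessor sits in the second preference slot and thus strictly dominates every out-of-block agent, while $a^*$, being ranked last by all others, can neither block with nor be blocked by any partner.
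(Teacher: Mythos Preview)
Your argument follows essentially the same route as the paper: parity of $|\mathcal{O}_I|$ from summing cycle lengths, at most one 1-cycle via a T2 violation, and maximising with 3-cycles (plus a possible fixed point). Your upper-bound tightness construction is in fact more explicit than the paper's, which only sketches the cycle-type pattern without writing down preferences.

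There is one genuine omission. The theorem asserts that \emph{all} bounds are tight, and your proposal only constructs instances meeting the upper bound. You still owe, for every $n\geq 4$ even, an unsolvable instance with exactly two odd cycles, and for every $n\geq 3$ odd, an unsolvable instance with exactly one odd cycle. Your block construction does not cover this: for $n=12$, say, it yields four 3-cycles, not two odd cycles. The paper handles the lower-bound tightness separately by taking a solvable instance on $n-4$ agents (e.g., symmetric preferences) and appending a 3-cycle together with a universally-disliked fixed point when $n$ is even, or a solvable instance on $n-3$ agents plus a single 3-cycle when $n$ is odd. This is a small addition in the same spirit as your tail construction, but it is required for the statement as written.
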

\begin{proof}
    First, note that by Theorem \ref{thm:tan91}, $I$ is unsolvable if and only if $\mathcal{O}_I$ contains at least one cycle of odd length at least 3. Furthermore, any stable partition can contain at most one 1-cycle (due to complete preference lists and stability). Clearly, an even-sized instance (i.e., $n$ is even) must have an even number of odd cycles (or none), and an odd-sized instance must have an odd number of odd cycles. Together, this proves the statements for $I$ solvable and also establishes the lower bounds for $I$ unsolvable.
    
    Tightness of the lower bound for $n$ even can be verified easily by constructing a solvable instance with $n-4$ agents (e.g., an instance with symmetric preferences) and adding four agents forming a 3-cycle and a 1-cycle. For the tightness of the lower bound for $n$ odd, construct a solvable instance with $n-3$ agents and add a 3-cycle.

    For the upper bound, note that if we want to maximise the number of odd cycles in the stable partition of an instance with a given number of agents, it is never beneficial to add cycles of odd length longer than 3. We also said that any stable partition can contain at most one 1-cycle. Thus, any instance with $n$ agents can contain at most $\left\lfloor\frac{n}{3}\right\rfloor$ 3-cycles. Furthermore, if $n$ mod $3=1$, then the remaining agent must form a 1-cycle in the stable partition. However, if $n$ mod $3=2$, then the remaining agents not in 3-cycles must form a transposition. Therefore, the maximum number of odd cycles follows the infinite sequence $1,0,1,2,1,2,3,2,3,\dots$ (for $n=1,2,3,4,\dots$) given by $\left\lfloor\frac{n}{3}\right\rfloor+((n$ mod $3)$ mod $2)$, which corresponds to OEIS sequence A008611 (shifted by 1) \cite{a008611}. 
\end{proof}

Another key result due to \textcite{glitznersagt24} is that the set of reduced stable partitions of any {\sc sr} instance $I$ is in a bijective correspondence with the set of stable matchings of a solvable sub-instance of $I$ (an instance with potentially some agents deleted and potentially some agents made mutually unacceptable, referred to as an {\sc sri} instance due to potentially incomplete preference lists).

\begin{theorem}[\cite{glitznersagt24}]
\label{thm:glitzner24}
    Let $I$ be an {\sc sr} instance with $n$ agents and let $I_T$ be an {\sc sri} instance with $n'\leq n$ agents derived from $I$ (using a method described in \cite{glitznersagt24}). Then $M_i$ is a stable matching of $I_T$ if and only if the concatenation of $M_i$ (as transpositions) and the invariant odd cycles $\mathcal{O}_I$ of $I$  forms a reduced stable partition of $I$.
\end{theorem}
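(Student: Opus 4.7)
The plan is to establish the bijective correspondence by verifying both directions, using Theorem \ref{thm:tan91} as the crucial starting ingredient: since every stable partition of $I$ contains exactly the same set of odd cycles $\mathcal{O}_I$, the construction of $I_T$ (from \cite{glitznersagt24}) can afford to strip these cycles away entirely. I would first recall that $I_T$ is obtained from $I$ by deleting all agents appearing in $\mathcal{O}_I$ and then further pruning each surviving agent's preference list to remove those entries whose retention would violate condition (T2) against the fixed cycles in $\mathcal{O}_I$. This second deletion step is what encodes the cross-interactions between cycle and non-cycle agents into ordinary preference lists of $I_T$.

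For the forward direction, I would take a stable matching $M_i$ of $I_T$ and form $\Pi := \mathcal{O}_I \cup M_i$, interpreting $M_i$ as a set of 2-cycles. Since $\Pi$ consists only of odd cycles and transpositions, it is automatically reduced, so it suffices to verify (T1) and (T2) in $I$. Condition (T1) is immediate: it holds trivially for transpositions and holds for cycles in $\mathcal{O}_I$ by hypothesis. For (T2), I would assume some pair $(a_j, a_k)$ violates it and split into three cases: both in odd cycles, both in transpositions, or one of each. The first case contradicts the stability of the original stable partition of $I$ that witnessed $\mathcal{O}_I$; the second case translates directly into a blocking pair of $M_i$ in $I_T$ (both agents survive in $A(I_T)$ and remain on each other's pruned lists), contradicting its stability; the mixed case is precisely what the preference-list pruning is designed to rule out.

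For the reverse direction, I would take a reduced stable partition $\Pi$ of $I$ and invoke Theorem \ref{thm:tan91} to write $\Pi = \mathcal{O}_I \cup M_i$, where $M_i$ is a collection of transpositions on exactly the agents of $A(I_T)$. Hence $M_i$ is a perfect matching in $I_T$, and to see that it is stable there, I would show that any blocking pair of $M_i$ in $I_T$ immediately yields a pair violating (T2) for $\Pi$ in $I$, contradicting the stability of $\Pi$. The map $M_i \mapsto \mathcal{O}_I \cup M_i$ is visibly injective, and the two arguments together show that it is a surjection onto $\mathcal{RP}$, yielding the claimed bijection.

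The main obstacle will be the mixed case of the forward direction: verifying that the pruning used to define $I_T$ is calibrated exactly right, neither too aggressive (so that some genuinely reduced stable partition is excluded) nor too lax (so that some blocking pair involving a cycle agent slips through). This requires translating the asymmetric, predecessor-based condition in (T2) into the symmetric acceptability structure of an \textsc{sri} instance. The key observation enabling this translation is that whether an agent $a_j$ outranks the predecessor of a cycle agent $a_k$ in $\mathcal{O}_I$ depends solely on $\mathcal{O}_I$ and is independent of the matching chosen on the remaining agents, so the pruning can be performed once and for all, uniformly across all choices of $M_i$.
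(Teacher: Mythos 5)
This statement is quoted from \textcite{glitznersagt24} and the present paper gives no proof of it at all, so there is no in-paper argument to compare against; I can only assess your proposal against the construction in the cited work. Your overall architecture --- peel off the invariant odd cycles (justified by Theorem \ref{thm:tan91}), map $M_i \mapsto \mathcal{O}_I \cup M_i$, and check (T1)/(T2) by a three-way case split on where the two agents of a putative blocking pair live --- is the right shape and matches how the correspondence is actually established. The reverse direction and the ``both in odd cycles'' and ``both in transpositions'' cases are fine as you sketch them.

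The genuine gap is exactly where you flag it, and your proposed fix does not close it. In the cited work $I_T$ is not defined by an ad hoc pruning ``calibrated against (T2)''; it is the restriction of the reduced preference lists (the stable table produced by Tan's algorithm) to the agents outside $\mathcal{O}_I$. Your closing ``key observation'' --- that whether $a_j$ outranks the predecessor of a cycle agent $a_k$ depends only on $\mathcal{O}_I$ --- handles only one half of condition (T2). The other half of a mixed violation requires the \emph{non-cycle} agent $a_k$ to prefer the cycle agent $a_j$ to $M_i(a_k)$, and $M_i(a_k)$ varies with the choice of stable matching of $I_T$. So a once-and-for-all pruning is only sound if you additionally know that in every stable matching of $I_T$ each agent is matched no worse than the tail of its reduced list, together with the reciprocity invariants of the stable table (if $a_k$ strictly prefers $a_j$ to the last entry of $a_k$'s reduced list, then $a_j$ does not strictly prefer $a_k$ to $a_j$'s predecessor in the partition). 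Without importing those invariants of Tan's algorithm, the mixed case is not ruled out, and it is also not established that the pruning is not too aggressive, i.e.\ that every reduced stable partition of $I$ really does restrict to a matching inside $I_T$'s truncated lists. As written, the proposal is a correct skeleton with the load-bearing lemma acknowledged but unproved.
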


We use this result to establish the following bounds.

\begin{theorem}
\label{thm:finegrainedrsc}
    Let $I$ be an {\sc sr} instance with $n\geq2$ agents. Let $\mathcal O_I$ denote the odd cycles of $I$, let $\mathcal{RSC}$ denote the reduced stable cycles of $I$, let $I_T$ be the instance derived from $I$ (as in \cite{glitzner2024structuralalgorithmicresultsstable}, with $n'\leq n$ agents), and finally let $\mathcal{SP}(I_T)$ be the stable pairs of $I_T$. Then, $1\leq \vert\mathcal{RSC}\vert = \vert\mathcal{O}_I\vert+\vert\mathcal{SP}(I_T)\vert \leq \frac{n(n-1)}{2}+1$.
\end{theorem}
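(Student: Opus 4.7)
The plan is to verify the three assertions in turn. The lower bound $\vert\mathcal{RSC}\vert \geq 1$ is immediate: every {\sc sr} instance admits at least one reduced stable partition by Theorem \ref{thm:tan91}, and for $n \geq 2$ such a partition must contain at least one cycle, which is itself a reduced stable cycle.

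For the central equality $\vert\mathcal{RSC}\vert = \vert\mathcal{O}_I\vert + \vert\mathcal{SP}(I_T)\vert$, I would partition $\mathcal{RSC}$ by cycle length. By definition, each reduced cycle is either a 2-cycle (a pair) or a cycle of odd length, and these two types are disjoint and together exhaust $\mathcal{RSC}$. For the odd cycles, Theorem \ref{thm:tan91} guarantees that every stable partition of $I$ contains exactly the same odd cycles, so the odd cycles appearing among the reduced stable cycles of $I$ are precisely the members of $\mathcal{O}_I$. For the 2-cycles, Theorem \ref{thm:glitzner24} provides a bijection between the reduced stable partitions of $I$ and the stable matchings of $I_T$ under which the transpositions of a reduced stable partition of $I$ correspond exactly to the pairs of the associated stable matching of $I_T$; hence the set of transpositions appearing in some reduced stable partition of $I$ coincides with $\mathcal{SP}(I_T)$. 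Summing the two disjoint contributions yields the equality.

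For the upper bound, I would use that $I_T$ is obtained from $I$ by removing the agents contained in the invariant odd cycles, so writing $A_O$ for the total number of such agents we have $n' \leq n - A_O$, and therefore $\vert\mathcal{SP}(I_T)\vert \leq \binom{n'}{2} \leq \binom{n - A_O}{2}$. Since any stable partition contains at most one 1-cycle while every other odd cycle has length at least 3, one may bound $A_O \geq 3\vert\mathcal{O}_I\vert - 2$ whenever $\vert\mathcal{O}_I\vert \geq 1$. A short case split on whether $\vert\mathcal{O}_I\vert$ is $0$, $1$, or at least $2$ then suffices to confirm that $\vert\mathcal{O}_I\vert + \binom{n - A_O}{2} \leq \binom{n}{2} + 1$ throughout, because $\binom{n - A_O}{2}$ shrinks quadratically in $A_O$ while $\vert\mathcal{O}_I\vert$ grows only linearly.

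The main conceptual obstacle is the equality step, which relies on a careful invocation of Theorem \ref{thm:glitzner24}: it is not enough to know that reduced stable partitions of $I$ are in bijection with stable matchings of $I_T$ as whole objects; one needs the correspondence to preserve the individual pair structure, so that a pair $\{a_i, a_j\}$ lies in some stable matching of $I_T$ precisely when $(a_i \; a_j)$ appears as a transposition in some reduced stable partition of $I$. Once this is in hand, the upper bound reduces to routine combinatorial bookkeeping, with the $+1$ slack absorbing the possible contribution of a fixed-point 1-cycle.
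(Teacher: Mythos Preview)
Your proposal is correct and follows essentially the same route as the paper. The lower bound and the equality step are handled identically (the paper is terser, but you are right to flag that Theorem~\ref{thm:glitzner24} must be read at the level of individual pairs, not just whole matchings). For the upper bound, the paper invokes Theorem~\ref{thm:oi} to bound $\vert\mathcal{O}_I\vert$ by $\lfloor (n-n')/3\rfloor + ((n-n')\bmod 3)\bmod 2$ and then does a single chain of algebra, whereas you use the equivalent inverse inequality $A_O \geq 3\vert\mathcal{O}_I\vert - 2$ and a small case split; these are the same counting argument phrased in opposite directions, so neither buys anything the other does not.
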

\begin{proof}
    The (tight) lower bound holds trivially: clearly any instance with two agents is solvable and admits a unique stable partition consisting of a transposition.

    The equality holds because, by Theorem \ref{thm:glitzner24}, any reduced stable partition consists of stable pairs of $I_T$ (as transpositions) and the invariant odd cycles $\mathcal{O}_I$. Due to the bijective correspondence between stable matchings of $I_T$ and reduced stable partitions of $I$, it follows that $\vert\mathcal{RSC}\vert = \vert\mathcal{O}_I\vert+\vert\mathcal{SP}(I_T)\vert$.

    The upper bound holds because, by Theorem \ref{thm:oi}, $\mathcal{O}_I$ contains at most $\left\lfloor\frac{n-n'}{3}\right\rfloor+((n-n'$ mod 3) mod 2) cycles, and the stable pairs can clearly contain at most $n'$ choose two pairs (i.e., at most $\frac{n'(n'-1)}{2}$ pairs). Therefore, $\vert\mathcal{O}_I\vert+\vert\mathcal{SP}(I_T)\vert \leq \left\lfloor\frac{n-n'}{3}\right\rfloor+((n-n'$ mod 3) mod 2) $+\frac{n'(n'-1)}{2}\leq \frac{n-n'}{3}+1+\frac{n'(n'-1)}{2}\leq \frac{3(n')^2-5n'+2n}{6}+1$, so it is left to show that $3(n')^2-5n'+2n\leq3n^2-3n$, which is indeed the case here because $n'\leq n$ and $n\geq 2$.
\end{proof}

This allows us to provide the following asymptotic upper bounds on the number of stable cycles.

\begin{corollary}
    Let $I$ be an {\sc sr} instance and let $\mathcal{SC}$ and $\mathcal{RSC}$ be the stable cycles and reduced stable cycles of $I$, respectively. Then $\vert\mathcal{RSC}\vert\leq\vert\mathcal{SC}\vert\leq 3\vert\mathcal{RSC}\vert$, therefore $\vert\mathcal{SC}\vert=O(n^2)$.
\end{corollary}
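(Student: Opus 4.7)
The first inequality $\vert\mathcal{RSC}\vert \leq \vert\mathcal{SC}\vert$ is immediate from the inclusion $\mathcal{RSC} \subseteq \mathcal{SC}$. For the second inequality, my plan is to partition $\mathcal{SC} = \mathcal{RSC} \sqcup \mathcal{NRSC}$, where $\mathcal{NRSC} := \mathcal{SC} \setminus \mathcal{RSC}$ denotes the non-reduced stable cycles (those of even length at least 4), so that $\vert\mathcal{SC}\vert = \vert\mathcal{RSC}\vert + \vert\mathcal{NRSC}\vert$. It thus suffices to prove $\vert\mathcal{NRSC}\vert \leq 2\vert\mathcal{RSC}\vert$.

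The approach is a charging argument grounded in Tan's decomposition property from Theorem \ref{thm:tan91}: every non-reduced stable cycle $C$ of length $2k \geq 4$ admits two decompositions into $k$ transpositions each (the ``even-indexed'' and ``odd-indexed'' pairings along $C$), every such transposition is itself a stable pair (hence lies in $\mathcal{RSC}$), and together the two decompositions exhaust the $2k$ cyclic adjacencies of $C$. I would then double-count incidences $(C, T)$ with $C \in \mathcal{NRSC}$ and $T \in \mathcal{RSC}$ appearing as a cyclic adjacency of $C$: the total equals $\sum_{C \in \mathcal{NRSC}} \vert C\vert \geq 4\vert\mathcal{NRSC}\vert$. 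If one can show that each stable pair serves as a cyclic adjacency in at most some constant $c$ non-reduced stable cycles, the same count is bounded above by $c \vert\mathcal{RSC}\vert$, yielding $\vert\mathcal{NRSC}\vert \leq (c/4)\vert\mathcal{RSC}\vert$; any $c \leq 8$ delivers the target bound.

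The main obstacle is precisely this structural lemma: bounding by a constant the number of non-reduced stable cycles in which a given stable pair $\{a,b\}$ occurs as a cyclic adjacency. The plan is to reduce via Theorem \ref{thm:glitzner24} to the solvable sub-instance $I_T$ (since all non-reduced even cycles live within $I_T$), and then exploit the rotation-like structure developed by Glitzner and Sagt. Informally, once an oriented adjacency $(a \to b)$ is fixed, extending it to a full non-reduced stable cycle forces a nearly-deterministic sequence of successor choices, because Tan's condition (T1) forces each successor to be at least as preferred as the predecessor, and (T2) prohibits ``crossing'' blocking pairs; only a bounded number of such extensions can be completed consistently around a full cycle.

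Once $\vert\mathcal{NRSC}\vert \leq 2\vert\mathcal{RSC}\vert$ is in hand, we obtain $\vert\mathcal{SC}\vert \leq 3\vert\mathcal{RSC}\vert$, and the asymptotic conclusion $\vert\mathcal{SC}\vert = O(n^2)$ follows directly from Theorem \ref{thm:finegrainedrsc}, which yielded $\vert\mathcal{RSC}\vert \leq \tfrac{n(n-1)}{2} + 1$.
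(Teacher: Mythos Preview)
Your approach is essentially the paper's: split off $\mathcal{NRSC}=\mathcal{SC}\setminus\mathcal{RSC}$, decompose each non-reduced even cycle into transpositions lying in $\mathcal{RSC}$, and charge. The structural fact you flag as the ``main obstacle'' --- that a given pair of agents occurs as a cyclic adjacency in at most a fixed constant number of longer stable cycles --- is precisely what the paper invokes, citing it as Theorem~5 of \cite{glitznersagt24} with constant $c=2$; plugging this into your double-count already gives $\vert\mathcal{NRSC}\vert\leq\tfrac12\vert\mathcal{RSC}\vert$, so there is no need to reprove it via the rotation-style determinism you sketch.
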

\begin{proof}
    We will argue that any longer even cycle of even length (precisely the cycles in $\mathcal{SC}\setminus\mathcal{RSC}$) can be broken into collections of reduced stable cycles, but that any reduced stable cycle can be in at most two such collections.
    
    By Theorem \ref{thm:tan91}, any cycle $c\in\mathcal{SC}\setminus\mathcal{RSC}$ can be broken into a collection of transpositions $C\subseteq\mathcal{RSC}$. However, by Theorem 5 of \cite{glitznersagt24}, any two agents can appear consecutively in at most two cycles of length longer than 2. Therefore, for any three distinct even-length cycles $c_1,c_2,c_3\in\mathcal{SC}\setminus\mathcal{RSC}$ and any choice of transposition decomposition (resulting in collections of transpositions) $C_1,C_2,C_2\subseteq\mathcal{RSC}$, we must have $C_1\cap C_2\cap C_3=\emptyset$. Precisely, any transposition in $\mathcal{RSC}$ can be part of at most two transposition-decomposition collections. Therefore, $\vert\mathcal{SC}\setminus\mathcal{RSC}\vert\leq2\vert\mathcal{RSC}\vert$. Note that $\mathcal{RSC}\subseteq \mathcal{SC}$, so  $\vert\mathcal{SC}\vert=\vert\mathcal{SC}\setminus\mathcal{RSC}\vert+\vert\mathcal{RSC}\vert$. Thus, the stated in equality of $\vert\mathcal{RSC}\vert\leq\vert\mathcal{SC}\vert\leq 3\vert\mathcal{RSC}\vert$ follows.
    
    The asymptotic bound follows from the result we established in Theorem \ref{thm:finegrainedrsc}. Specifically, we have that $\vert\mathcal{RSC}\vert\leq\frac{n(n-1)}{2}+1=O(n^2)$.
\end{proof}

Now with these formal results in mind, lets return to empirical observations about the stable cycles. Tables \ref{table:numcycleseven}-\ref{table:numcyclesodd} show the average number of stable cycles ($\mathcal{SC}$) and reduced stable cycles ($\mathcal{RSC}$) contained within $\mathcal{RP}$ and $\mathcal P$ for the different statistical cultures, and for different even and odd numbers of agents $n$. 

For \textit{IC}, the results suggest that the expected number of stable cycles grows slightly faster than the expected number of reduced stable cycles, averaging a bit higher than the necessary minimum number of cycles of roughly $\frac{n}{2}$ (assuming most cycles are of length 2, as we will see later is typically the case). Interestingly, the maximum number of stable cycles and reduced stable cycles admitted by any instance were 323 and 315, respectively, both admitted by an instance with $n=500$. 

In fact, for the other statistical cultures, it is easily seen in the tables that there are only small differences between the average numbers of cycles each of them admits (with respect to a fixed number of agents). The highest numbers of stable cycles and reduced stable cycles are admitted by \emph{2-IC} instances with $n=500$ (coming to 330 and 323, respectively), whereas the highest such numbers admitted by \emph{Mallows-Euclidean} instances only come to 273 and 265 (for multiple instances with $n=500$ and $n=501)$, respectively.

\begin{table}[!htb]
    \centering
    \footnotesize
    \begin{tabular}{c c c c c c c c c c c c c}
        \toprule
        & & \textbf{10} & \textbf{20} & \textbf{30} & \textbf{40} & \textbf{60} & \textbf{80} & \textbf{100} & \textbf{200} & \textbf{300} & \textbf{400} & \textbf{500} \\
        \midrule
        \multirow{4}{*}{$\vert \mathcal{SC} \vert$}  
        & \textit{IC} & 6.17 & 12.01 & 17.33 & 22.63 & 32.87 & 43.16 & 53.27 & 103.70 & 153.89 & 203.70 & 253.95 \\
        & \textit{2-IC} & 4.80 & 12.30 & 14.22 & 23.85 & 34.70 & 45.23 & 55.53 & 106.29 & 156.80 & 207.09 & 257.30 \\
        & \textit{Attributes} & 5.06 & 10.16 & 15.21 & 20.28 & 30.40 & 40.63 & 50.78 & 101.63 & 152.65 & 203.65 & 254.76 \\
        & \textit{M-Euclidean} & 5.31 & 10.29 & 15.33 & 20.41 & 30.45 & 40.58 & 50.67 & 101.08 & 151.33 & 201.60 & 251.95 \\
        \cmidrule{1-13}
        \multirow{4}{*}{$\vert \mathcal{RSC} \vert$} 
        & \textit{IC} & 5.80 & 11.41 & 16.63 & 21.87 & 32.04 & 42.28 & 52.36 & 102.66 & 152.79 & 202.56 & 252.77 \\
        & \textit{2-IC} & 4.77 & 11.56 & 14.00 & 22.68 & 33.30 & 43.71 & 53.93 & 104.47 & 154.86 & 205.05 & 255.20 \\
        & \textit{Attributes} & 5.04 & 10.09 & 15.11 & 20.13 & 30.19 & 40.30 & 50.38 & 100.81 & 152.65 & 201.94 & 252.58 \\
        & \textit{M-Euclidean} & 5.20 & 10.15 & 15.14 & 20.16 & 30.10 & 40.10 & 50.08 & 99.90 & 149.64 & 199.38 & 249.18 \\
        \bottomrule
    \end{tabular}
    \caption{Average number of cycles in stable partitions and reduced stable partitions for $n$ even}
    \label{table:numcycleseven}
\end{table}

\begin{table}[!htb]
    \centering
    \footnotesize
    \begin{tabular}{c c c c c c c c c c c c c}
        \toprule
        & & \textbf{11} & \textbf{21} & \textbf{31} & \textbf{41} & \textbf{61} & \textbf{81} & \textbf{101} & \textbf{201} & \textbf{301} & \textbf{401} & \textbf{501} \\
        \midrule
        \multirow{4}{*}{$\vert \mathcal{SC} \vert$}  
        & \textit{IC} & 5.24 & 10.04 & 14.93 & 19.97 & 30.05 & 39.98 & 50.19 & 100.33 & 150.47 & 200.98 & 250.72 \\
        & \textit{2-IC} & 6.20 & 11.37 & 16.85 & 21.93 & 32.29 & 42.52 & 52.69 & 103.42 & 153.74 & 203.66 & 254.15 \\
        & \textit{Attributes} & 5.89 & 10.86 & 15.89 & 20.91 & 31.01 & 41.15 & 51.30 & 102.22 & 153.20 & 204.21 & 255.26 \\
        & \textit{M-Euclidean} & 5.83 & 10.90 & 15.94 & 20.97 & 30.98 & 41.16 & 51.17 & 101.51 & 151.88 & 202.29 & 252.67 \\
        \cmidrule{1-13}
        \multirow{4}{*}{$\vert \mathcal{RSC} \vert$} 
        & \textit{IC} & 5.17 & 9.87 & 14.69 & 19.66 & 29.62 & 39.49 & 49.64 & 99.60 & 149.63 & 200.03 & 249.72 \\
        & \textit{2-IC} & 5.97 & 10.94 & 16.21 & 21.17 & 31.35 & 41.45 & 51.53 & 101.94 & 152.10 & 201.94 & 252.29 \\
        & \textit{Attributes} & 5.86 & 10.79 & 15.78 & 20.76 & 30.77 & 40.83 & 50.89 & 101.37 & 151.91 & 202.47 & 253.07 \\
        & \textit{M-Euclidean} & 5.76 & 10.76 & 15.73 & 20.71 & 30.62 & 40.65 & 50.57 & 100.35 & 150.16 & 200.02 & 249.83 \\
        \bottomrule
    \end{tabular}
    \caption{Average number of cycles in stable partitions and reduced stable partitions for $n$ odd}
    \label{table:numcyclesodd}
\end{table}

It is surprising to observe these small numbers of distinct cycles, but this does not contradict anything we know about the structure of stable partitions per se. The combinatorial nature of choosing (the right) stable cycles to build a partition allows even a small number of distinct cycles to lead to a much larger number of stable partitions that they can form.

\subsection{Stable Matchings and Pairs}
\label{sec:nummatchings}

Taking a slight detour from unsolvable instances, we turn to the question of whether the observations above for stable partitions also hold for stable matchings (denoted by $\mathcal M$) and the stable pairs (denoted by $\mathcal{SP}$) contained within them with respect to solvable {\sc sr} instances. Recall that stable matchings and reduced stable partitions are equivalent in this case. Note that for consistency with $\mathcal{RSC}$, we include the fixed point present in solvable instances containing an odd number of agents in $\vert \mathcal{SP}\vert$ (in the sense that this single agent is paired to itself).

Tables \ref{table:nummatchingspairseven}-\ref{table:nummatchingspairsodd} show the average values we observed for the four relevant statistical cultures and some selected values of $n$. N/A indicates the case that there were no solvable instances to average over. We can see that the number of stable matchings and stable cycles admitted by solvable instances is, on average, just slightly higher than the number of reduced stable partitions and their cycles averaged over both solvable and unsolvable instances together. This makes sense, as we need to take into account the presence of invariant odd cycles which decreases the potential for a larger number of reduced stable partitions (informally, anything that is fixed or invariant cannot vary between stable partitions). There are a few observations which seem out of place, such as an average of two stable matchings for \emph{Mallows-Euclidean} instances with $n=300$; this is likely because there were few such solvable instances to average over.

\begin{table}[!htb]
    \centering
    \footnotesize
    \begin{tabular}{c c c c c c c c c c c c c}
        \toprule
        & & \textbf{10} & \textbf{20} & \textbf{30} & \textbf{40} & \textbf{60} & \textbf{80} & \textbf{100} & \textbf{200} & \textbf{300} & \textbf{400} & \textbf{500} \\
        \midrule        
        \multirow{4}{*}{$\vert \mathcal M \vert$} 
        & \textit{IC} & 1.42 & 1.76 & 1.91 & 2.08 & 2.26 & 2.40 & 2.56 & 3.05 & 3.30 & 3.53 & 3.58 \\
        & \textit{2-IC} & 1.05 & 2.02 & 1.28 & 3.03 & 3.77 & 4.36 & 4.84 & 6.57 & 7.49 & 8.88 & 9.45 \\
        & \textit{Attributes} & 1.03 & 1.09 & 1.13 & 1.17 & 1.28 & 1.41 & 1.54 & 2.42 & 3.74 & 5.85 & 8.93 \\
        & \textit{M-Euclidean} & 1.14 & 1.18 & 1.24 & 1.31 & 1.49 & 1.75 & 1.93 & 2.57 & 2.00 & 4.00 & N/A \\
        \cmidrule{1-13}
        \multirow{4}{*}{$\vert \mathcal{SP} \vert$} 
        & \textit{IC} & 5.97 & 11.96 & 17.50 & 23.13 & 33.94 & 44.63 & 55.18 & 107.52 & 159.12 & 210.48 & 261.75 \\
        & \textit{2-IC} & 5.00 & 11.94 & 15.58 & 23.82 & 35.19 & 46.19 & 57.13 & 110.54 & 162.66 & 214.84 & 266.29 \\
        & \textit{Attributes} & 5.06 & 10.18 & 15.25 & 20.35 & 30.54 & 40.74 & 50.97 & 101.90 & 152.97 & 204.05 & 255.33 \\
        & \textit{M-Euclidean} & 5.30 & 10.35 & 15.46 & 20.59 & 30.86 & 41.20 & 51.37 & 101.93 & 152.00 & 204.00 & N/A \\
        \bottomrule
    \end{tabular}
    \caption{Average number of stable matchings and pairs of solvable instances for $n$ even}
    \label{table:nummatchingspairseven}
\end{table}

\begin{table}[!htb]
    \centering
    \footnotesize
    \begin{tabular}{c c c c c c c c c c c c c}
        \toprule
        & & \textbf{11} & \textbf{21} & \textbf{31} & \textbf{41} & \textbf{61} & \textbf{81} & \textbf{101} & \textbf{201} & \textbf{301} & \textbf{401} & \textbf{501} \\
        \midrule        
        \multirow{4}{*}{$\vert \mathcal M \vert$} 
        & \textit{IC} & 1.10 & 1.26 & 1.30 & 1.40 & 1.55 & 1.55 & 1.83 & N/A & N/A & N/A & N/A \\
        & \textit{2-IC} & 1.27 & 1.57 & 1.90 & 2.15 & 2.62 & 2.98 & 3.28 & 2.80 & 4.00 & N/A & N/A \\
        & \textit{Attributes} & 1.04 & 1.08 & 1.13 & 1.17 & 1.29 & 1.41 & 1.53 & 2.43 & 3.88 & 6.21 & 8.33 \\
        & \textit{M-Euclidean} & 1.09 & 1.16 & 1.25 & 1.34 & 1.52 & 1.77 & 1.86 & 3.14 & 2.00 & 4.00 & N/A \\
        \cmidrule{1-13}
        \multirow{4}{*}{$\vert \mathcal{SP} \vert$} 
        & \textit{IC} & 6.21 & 11.62 & 16.71 & 22.00 & 32.48 & 42.30 & 53.83 & N/A & N/A & N/A & N/A \\
        & \textit{2-IC} & 6.57 & 12.19 & 17.95 & 23.49 & 34.45 & 45.43 & 55.78 & 107.40 & 157.00 & N/A & N/A \\
        & \textit{Attributes} & 6.08 & 11.15 & 16.28 & 21.34 & 31.57 & 41.76 & 51.94 & 102.93 & 154.14 & 205.13 & 255.92 \\
        & \textit{M-Euclidean} & 6.18 & 11.33 & 16.48 & 21.64 & 31.89 & 42.22 & 52.43 & 103.32 & 152.67 & 205.00 & N/A \\
        \bottomrule
    \end{tabular}
    \caption{Average number of stable matchings and pairs of solvable instances for $n$ odd}
    \label{table:nummatchingspairsodd}
\end{table}

Although this was already suspected after observing the number of reduced stable partitions in Section \ref{sec:numpartitions}, this is further evidence suggesting that we should expect the number of reduced stable partitions and stable matchings to be small in most cases.

\subsection{Odd Cycles}
\label{sec:oddcycles}

Returning solely to unsolvable instances now, we already know that cycles of odd length are of particular importance, making an instance $I$ unsolvable and being invariant between all stable partitions of $I$. However, how many and which cycles of odd length should we expect on average? Table \ref{table:oddvagents} gives some indications for the numbers and lengths to expect. 

We can observe an interesting tradeoff between the two extreme cultures \emph{IC} and \emph{Mallows-Euclidean} where the former shows a significant growth in its average odd cycle length between $n=10$ and $n=500$ whereas the latter shows a much smaller (or no) growth, while the opposite is the case when inspecting their average numbers of odd cycles. 

For \textit{IC}, we can see that, in expectation, an instance only admits a very small number of odd cycles and this parameter grows very slowly. This is consistent with previous results by \textcite{pittel19}. Notice that any even-sized unsolvable instance must admit at least two odd cycles, so the results indicate that most instances do not admit any more than those two. Note also that the observations are in line with the bound proven by \textcite{pittel19} for $n$ even (the bound states that, in expectation, the total number of cycles of odd length grows in the order of at most $n^{\frac{1}{4}}\log n$). 

The \emph{Attributes} and \emph{Mallows-Euclidean} instances only admit very short odd cycles on average, not much longer than 3 even for $n=501$, but it is interesting to see a small dip for $n=101$ in between the values for $n=11$ and $n=501$. The average number of odd cycles also remains low for \emph{2-IC} and \emph{Attributes} with neither admitting even 5 such cycles on average. We will see later that we did not observe any instance with at most 101 agents from any statistical culture that admits more than 11 odd-length cycles (although it is simple to construct an instance with $n=99$ that admits 33 cycles of length 3 in its stable partition).

\begin{table}[!htb]
    \centering
    \begin{tabular}{c c c c c c c c}
        \toprule
        & & \multicolumn{3}{c}{$n$ even} & \multicolumn{3}{c}{$n$ odd} \\
        \cmidrule(lr){3-5} \cmidrule(lr){6-8}
        
        & Culture & \textbf{10} & \textbf{100} & \textbf{500} & \textbf{11} & \textbf{101} & \textbf{501} \\
        \midrule
        \multirow{4}{*}{\text{Average odd cycle length}} 
        & \textit{IC} & 2.62 & 6.08 & 10.51 & 3.84 & 7.14 & 11.66 \\
        & \textit{2-IC} & 2.43 & 4.82 & 8.20 & 3.11 & 5.51 & 8.83 \\
        & \textit{Attributes} & 2.11 & 2.62 & 3.00 & 3.02 & 2.86 & 3.01 \\
        & \textit{M-Euclidean} & 2.25 & 2.78 & 3.04 & 3.16 & 2.82 & 3.04 \\
        \midrule
        \multirow{4}{*}{\text{Average number of odd cycles}} 
        & \textit{IC} & 2.00 & 2.04 & 2.15 & 1.00 & 1.12 & 1.38 \\
        & \textit{2-IC} & 2.00 & 2.32 & 2.74 & 1.14 & 1.63 & 2.24 \\
        & \textit{Attributes} & 2.00 & 2.31 & 4.44 & 1.09 & 1.99 & 4.43 \\
        & \textit{M-Euclidean} & 2.00 & 3.20 & 12.58 & 1.09 & 3.19 & 12.62 \\
        \bottomrule
    \end{tabular}
    \caption{Cycle properties of unsolvable instances and  different $n$}
    \label{table:oddvagents}
\end{table}

We now turn to the number of agents in odd cycles (also denoted by $n_{odd}$ in the literature \cite{mertens05}), as shown in Figure \ref{fig:noddic} for \textit{IC} instances. For $n$ even, the results roughly match the following conjecture by \textcite{mertens05}: $n_{odd} \simeq 2.38 \sqrt{\frac{n}{\ln n}}$. Note that the conjecture is based on a best fit estimate for a different range of $n$ than plotted here, which explains the difference between our observations and the estimate. Again, we complement this result by fitting a similar function to our observations for $n$ odd and estimate that $n_{odd} \simeq 1.74 \sqrt{\frac{n}{\ln n}}$. Both conjectures are also plotted in Figure \ref{fig:noddic}.

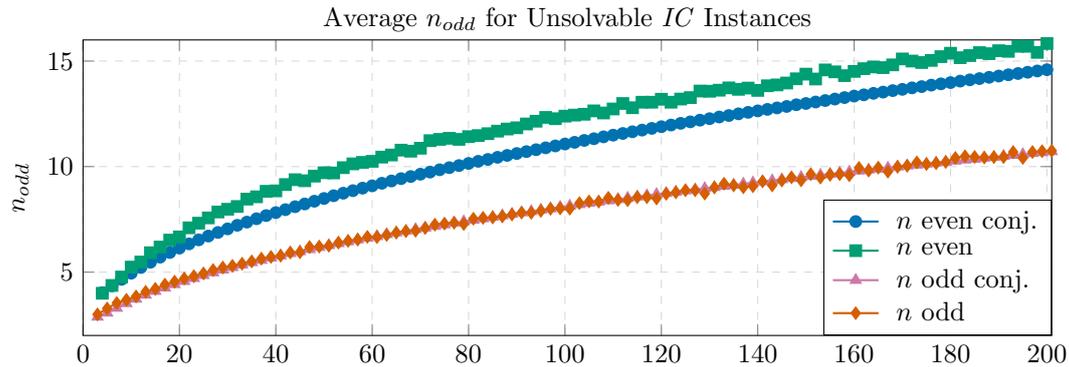
\begin{figure}[!htb]
    \centering
    \begin{tikzpicture}
        \begin{axis}[
            width=.9\textwidth,
            height=5.5cm,
            ylabel={$n_{odd}$},
            ymin=2,       
            ymax=16,         
            xmin=0,
            xmax=201,
            grid=both,
            grid style={dashed, gray!30},
            cycle list name=color list,
            every axis plot/.append style={thick},
            title={Average $n_{odd}$ for Unsolvable \textit{IC} Instances},
            title style={
                yshift=-1.5ex  
            },
            legend cell align={left},
            legend style={
                at={(1,0)}, 
                anchor=south east,
                column sep=1ex, 
            },
        ]
        \addplot[acmDarkBlue, mark=*] table [x=n, y=evenConj] {data/nodd/icEven.txt};
        \addplot[acmGreen, mark=square*] table [x=n, y=nodd] {data/nodd/icEven.txt};
        \addplot[acmPink, mark=triangle*] table [x=n, y=oddConj] {data/nodd/icOdd.txt};
        \addplot[acmOrange, mark=diamond*] table [x=n, y=nodd] {data/nodd/icOdd.txt};
        \addlegendentry{$n$ even conj.}
        \addlegendentry{$n$ even}
        \addlegendentry{$n$ odd conj.}
        \addlegendentry{$n$ odd}
        \end{axis}
    \end{tikzpicture}
    \caption{Expected $n_{odd}$ in \emph{IC} instances}
    \label{fig:noddic}
\end{figure}

Furthermore, Table \ref{table:nodddiverse} gives some interesting concrete numbers for $n_{odd}$ for the four relevant statistical cultures and $n$ even and odd. We see moderate growth for all cultures, where \textit{Mallows-Euclidean} grows especially fast and \textit{Attributes} the slowest. On average though, there are fewer than 39 agents in odd-length cycles even in the former culture and with 501 agents.

\begin{table}[!htb]
    \centering
    \begin{tabular}{c c c c c c c c}
        \toprule
        & & \multicolumn{3}{c}{$n$ even} & \multicolumn{3}{c}{$n$ odd} \\
        \cmidrule(lr){3-5} \cmidrule(lr){6-8}
        & Culture & \textbf{10} & \textbf{100} & \textbf{500} & \textbf{11} & \textbf{101} & \textbf{501} \\
        \midrule
        \multirow{4}{*}{\text{Average $n_{odd}$}} 
        & \textit{IC} & 5.24 & 12.39 & 22.56 & 3.85 & 8.01 & 16.13 \\
        & \textit{2-IC} & 4.86 & 11.20 & 22.45 & 3.56 & 8.96 & 19.75 \\
        & \textit{Attributes} & 4.23 & 6.03 & 13.30 & 3.30 & 5.70 & 13.32 \\
        & \textit{M-Euclidean} & 4.50 & 8.90 & 38.30 & 3.43 & 8.98 & 38.40 \\
        \bottomrule
    \end{tabular}
    \caption{Number of agents in odd-length cycles for different $n$}
    \label{table:nodddiverse}
\end{table}

In the final experiment that focuses specifically on the odd cycles, we investigated the expected number of stable cycles of small odd lengths for unsolvable instances. This answers the following question for small values of $k$: \emph{given an unsolvable IC instance with 100 agents, how many cycles of (odd) length k do we expect?} Figure \ref{fig:cyclesvn_subplots} gives an extensive overview of our findings for different cultures.

\begin{figure}[!htb]
    \centering

    \begin{subfigure}{0.49\textwidth}
        \begin{tikzpicture}
            \begin{axis}[
                width=\textwidth,
                height=4cm,
                ylabel={Number of Cycles},
                ymin=0,       
                ymax=1.3,         
                xmin=0,
                xmax=201,
                grid=both,
                grid style={dashed, gray!30},
                cycle list name=color list,
                every axis plot/.append style={thick},
                title={\textit{IC (n even)}},
                title style={
                    yshift=-1.5ex  
                },
                legend cell align={left},
                legend style={
                    at={(1.42,2.3)}, 
                    anchor=north east,
                    column sep=1ex, 
                },
            ]
                \addplot[acmDarkBlue, mark=*] table [x=n, y=1-cycles] {data/cycleLengths/icEven.txt};
                \addplot[acmGreen, mark=square*] table [x=n, y=3-cycles] {data/cycleLengths/icEven.txt};
                \addplot[acmPink, mark=triangle*] table [x=n, y=5-cycles] {data/cycleLengths/icEven.txt};
                \addplot[acmOrange, mark=diamond*] table [x=n, y=7-cycles] {data/cycleLengths/icEven.txt};
                \addplot[acmYellow, mark=o] table [x=n, y=9-cycles] {data/cycleLengths/icEven.txt};
                \addplot[acmLightBlue, mark=star] table [x=n, y=11-cycles] {data/cycleLengths/icEven.txt};
            \addlegendentry{Cycles of Length 1}
            \addlegendentry{Cycles of Length 3}
            \addlegendentry{Cycles of Length 5}
            \addlegendentry{Cycles of Length 7}
            \addlegendentry{Cycles of Length 9}
            \addlegendentry{Cycles of Length 11}
            \end{axis}
        \end{tikzpicture}
    \end{subfigure}
    \hfill
    \begin{subfigure}{0.49\textwidth}
        \begin{tikzpicture}
            \begin{axis}[
                width=\textwidth,
                height=4cm,
                ymin=0,       
                ymax=1.1,         
                xmin=0,
                xmax=201,
                grid=both,
                grid style={dashed, gray!30},
                cycle list name=color list,
                every axis plot/.append style={thick},
                title={\textit{IC (n odd)}},
                title style={
                    yshift=-1.5ex  
                },
            ]
                \addplot[acmDarkBlue, mark=*] table [x=n, y=1-cycles] {data/cycleLengths/icOdd.txt};
                \addplot[acmGreen, mark=square*] table [x=n, y=3-cycles] {data/cycleLengths/icOdd.txt};
                \addplot[acmPink, mark=triangle*] table [x=n, y=5-cycles] {data/cycleLengths/icOdd.txt};
                \addplot[acmOrange, mark=diamond*] table [x=n, y=7-cycles] {data/cycleLengths/icOdd.txt};
                \addplot[acmYellow, mark=o] table [x=n, y=9-cycles] {data/cycleLengths/icOdd.txt};
                \addplot[acmLightBlue, mark=star] table [x=n, y=11-cycles] {data/cycleLengths/icOdd.txt};
            \end{axis}
        \end{tikzpicture}
    \end{subfigure}

    \vspace{0.1cm}

    \begin{subfigure}{0.49\textwidth}
        \begin{tikzpicture}
            \begin{axis}[
                width=\textwidth,
                height=4cm,
                ylabel={Number of Cycles},
                ymin=0,       
                ymax=1.4,         
                xmin=0,
                xmax=201,
                grid=both,
                grid style={dashed, gray!30},
                cycle list name=color list,
                every axis plot/.append style={thick},
                title={\textit{2-IC (n even)}},
                title style={
                    yshift=-1.5ex  
                },
            ]
                \addplot[acmDarkBlue, mark=*] table [x=n, y=1-cycles] {data/cycleLengths/2icEven.txt};
                \addplot[acmGreen, mark=square*] table [x=n, y=3-cycles] {data/cycleLengths/2icEven.txt};
                \addplot[acmPink, mark=triangle*] table [x=n, y=5-cycles] {data/cycleLengths/2icEven.txt};
                \addplot[acmOrange, mark=diamond*] table [x=n, y=7-cycles] {data/cycleLengths/2icEven.txt};
                \addplot[acmYellow, mark=o] table [x=n, y=9-cycles] {data/cycleLengths/2icEven.txt};
                \addplot[acmLightBlue, mark=star] table [x=n, y=11-cycles] {data/cycleLengths/2icEven.txt};
            \end{axis}
        \end{tikzpicture}
    \end{subfigure}
    \hfill
    \begin{subfigure}{0.49\textwidth}
        \begin{tikzpicture}
            \begin{axis}[
                width=\textwidth,
                height=4cm,
                ymin=0,       
                ymax=1.1,         
                xmin=0,
                xmax=201,
                grid=both,
                grid style={dashed, gray!30},
                cycle list name=color list,
                every axis plot/.append style={thick},
                title={\textit{2-IC (n odd)}},
                title style={
                    yshift=-1.5ex  
                },
            ]
                \addplot[acmDarkBlue, mark=*] table [x=n, y=1-cycles] {data/cycleLengths/2icOdd.txt};
                \addplot[acmGreen, mark=square*] table [x=n, y=3-cycles] {data/cycleLengths/2icOdd.txt};
                \addplot[acmPink, mark=triangle*] table [x=n, y=5-cycles] {data/cycleLengths/2icOdd.txt};
                \addplot[acmOrange, mark=diamond*] table [x=n, y=7-cycles] {data/cycleLengths/2icOdd.txt};
                \addplot[acmYellow, mark=o] table [x=n, y=9-cycles] {data/cycleLengths/2icOdd.txt};
                \addplot[acmLightBlue, mark=star] table [x=n, y=11-cycles] {data/cycleLengths/2icOdd.txt};
            \end{axis}
        \end{tikzpicture}
    \end{subfigure}

    \vspace{0.1cm}

    \begin{subfigure}{0.49\textwidth}
        \begin{tikzpicture}
            \begin{axis}[
                width=\textwidth,
                height=4cm,
                ylabel={Number of Cycles},
                ymin=0,       
                ymax=2.1,         
                xmin=0,
                xmax=201,
                grid=both,
                grid style={dashed, gray!30},
                cycle list name=color list,
                every axis plot/.append style={thick},
                title={\textit{Attributes (n even)}},
                title style={
                    yshift=-1.5ex  
                },
            ]
                \addplot[acmDarkBlue, mark=*] table [x=n, y=1-cycles] {data/cycleLengths/attributesEven.txt};
                \addplot[acmGreen, mark=square*] table [x=n, y=3-cycles] {data/cycleLengths/attributesEven.txt};
                \addplot[acmPink, mark=triangle*] table [x=n, y=5-cycles] {data/cycleLengths/attributesEven.txt};
                \addplot[acmOrange, mark=diamond*] table [x=n, y=7-cycles] {data/cycleLengths/attributesEven.txt};
                \addplot[acmYellow, mark=o] table [x=n, y=9-cycles] {data/cycleLengths/attributesEven.txt};
                \addplot[acmLightBlue, mark=star] table [x=n, y=11-cycles] {data/cycleLengths/attributesEven.txt};
            \end{axis}
        \end{tikzpicture}
    \end{subfigure}
    \hfill
    \begin{subfigure}{0.49\textwidth}
        \begin{tikzpicture}
            \begin{axis}[
                width=\textwidth,
                height=4cm,
                ymin=0,       
                ymax=2.1,         
                xmin=0,
                xmax=201,
                grid=both,
                grid style={dashed, gray!30},
                cycle list name=color list,
                every axis plot/.append style={thick},
                title={\textit{Attributes (n odd)}},
                title style={
                    yshift=-1.5ex  
                },
            ]
                \addplot[acmDarkBlue, mark=*] table [x=n, y=1-cycles] {data/cycleLengths/attributesOdd.txt};
                \addplot[acmGreen, mark=square*] table [x=n, y=3-cycles] {data/cycleLengths/attributesOdd.txt};
                \addplot[acmPink, mark=triangle*] table [x=n, y=5-cycles] {data/cycleLengths/attributesOdd.txt};
                \addplot[acmOrange, mark=diamond*] table [x=n, y=7-cycles] {data/cycleLengths/attributesOdd.txt};
                \addplot[acmYellow, mark=o] table [x=n, y=9-cycles] {data/cycleLengths/attributesOdd.txt};
                \addplot[acmLightBlue, mark=star] table [x=n, y=11-cycles] {data/cycleLengths/attributesOdd.txt};
            \end{axis}
        \end{tikzpicture}
    \end{subfigure}

    \vspace{0.1cm}

    \begin{subfigure}{0.49\textwidth}
        \begin{tikzpicture}
            \begin{axis}[
                width=\textwidth,
                height=4cm,
                xlabel={Number of Agents ($n$)},
                ylabel={Number of Cycles},
                ymin=0,       
                ymax=4.8,         
                xmin=0,
                xmax=201,
                ytick={0,0.5,2,4},
                grid=both,
                grid style={dashed, gray!30},
                cycle list name=color list,
                every axis plot/.append style={thick},
                title={\textit{M-Euclidean (n even)}},
                title style={
                    yshift=-1.5ex  
                },
            ]
                \addplot[acmDarkBlue, mark=*] table [x=n, y=1-cycles] {data/cycleLengths/MEuclEven.txt};
                \addplot[acmGreen, mark=square*] table [x=n, y=3-cycles] {data/cycleLengths/MEuclEven.txt};
                \addplot[acmPink, mark=triangle*] table [x=n, y=5-cycles] {data/cycleLengths/MEuclEven.txt};
                \addplot[acmOrange, mark=diamond*] table [x=n, y=7-cycles] {data/cycleLengths/MEuclEven.txt};
                \addplot[acmYellow, mark=o] table [x=n, y=9-cycles] {data/cycleLengths/MEuclEven.txt};
                \addplot[acmLightBlue, mark=star] table [x=n, y=11-cycles] {data/cycleLengths/MEuclEven.txt};
            \end{axis}
        \end{tikzpicture}
    \end{subfigure}
    \hfill
    \begin{subfigure}{0.49\textwidth}
        \begin{tikzpicture}
            \begin{axis}[
                width=\textwidth,
                height=4cm,
                xlabel={Number of Agents ($n$)},
                ymin=0,       
                ymax=4.8,         
                xmin=0,
                xmax=201,
                ytick={0,0.5,2,4},
                grid=both,
                grid style={dashed, gray!30},
                cycle list name=color list,
                every axis plot/.append style={thick},
                title={\textit{M-Euclidean (n odd)}},
                title style={
                    yshift=-1.5ex  
                },
            ]
                \addplot[acmDarkBlue, mark=*] table [x=n, y=1-cycles] {data/cycleLengths/MEuclOdd.txt};
                \addplot[acmGreen, mark=square*] table [x=n, y=3-cycles] {data/cycleLengths/MEuclOdd.txt};
                \addplot[acmPink, mark=triangle*] table [x=n, y=5-cycles] {data/cycleLengths/MEuclOdd.txt};
                \addplot[acmOrange, mark=diamond*] table [x=n, y=7-cycles] {data/cycleLengths/MEuclOdd.txt};
                \addplot[acmYellow, mark=o] table [x=n, y=9-cycles] {data/cycleLengths/MEuclOdd.txt};
                \addplot[acmLightBlue, mark=star] table [x=n, y=11-cycles] {data/cycleLengths/MEuclOdd.txt};
            \end{axis}
        \end{tikzpicture}
    \end{subfigure}
    \caption{Expected number of odd-length cycles in unsolvable instances by cycle length}
    \label{fig:cyclesvn_subplots}
\end{figure}
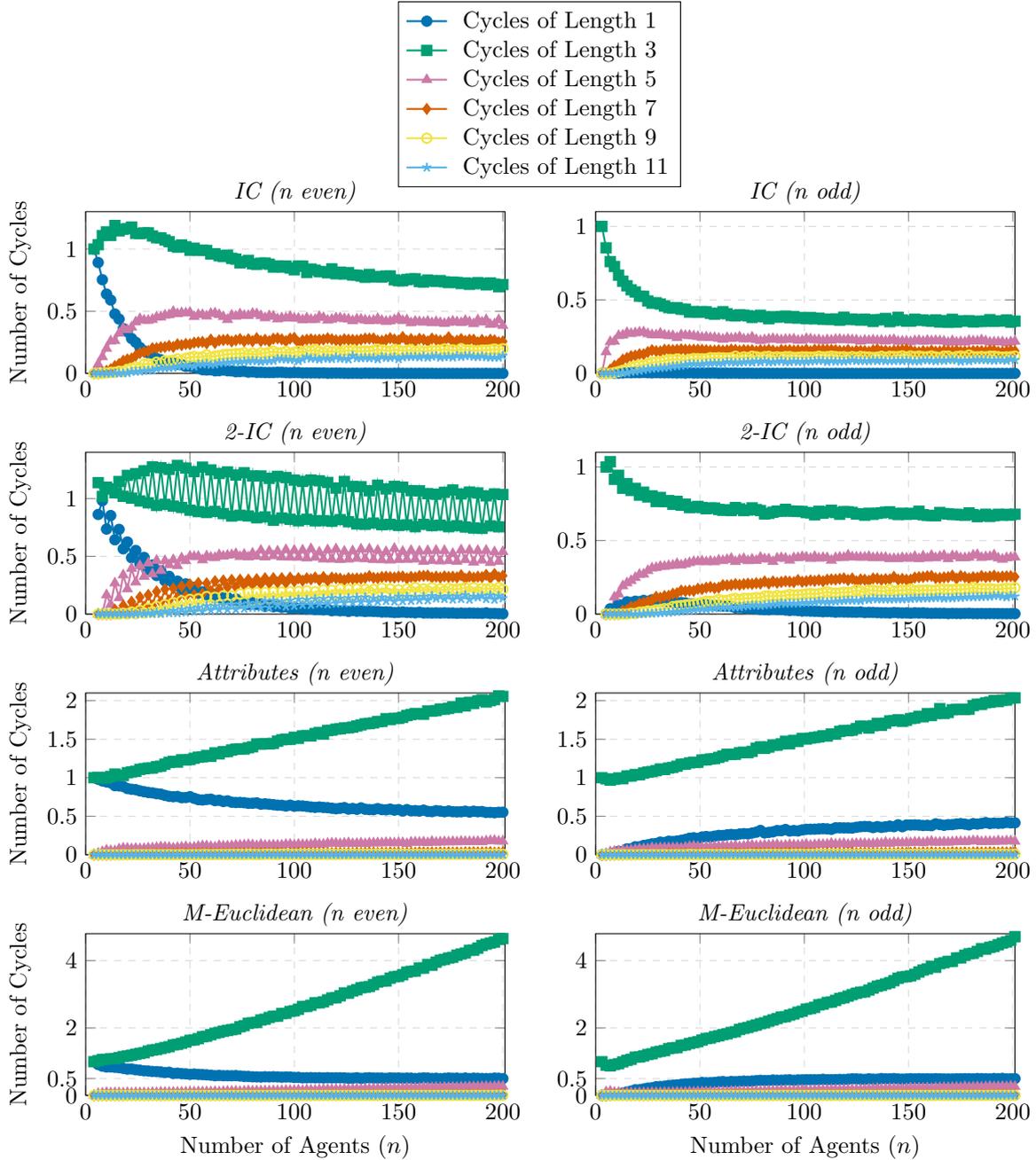

It is clear that for $n=4$, every unsolvable instance admits exactly one stable cycle of length 1 (i.e., 1-cycle) and one stable cycle of length 3 (3-cycle). However, it is interesting to see that for \emph{IC} and $n$ even, while the likelihood of a 1-cycle decreases quickly and is essentially 0 for $n=100$ (this confirms Pittel's various theoretical results, e.g., the super-polynomially high probability that an instance with preferences chosen uniformly at random does not admit a fixed point \cite{pittel19}), the 3-cycles remain the most common odd cycles throughout. Furthermore, the expected numbers of cycles of length 5 or longer remain relatively low compared to that of 3-cycles, although we can see a clear hierarchy in the sense that we should expect more 5-cycles than 7-cycles than 9-cycles, etc. 

Notice that the other cultures also show a decay in the likelihood of 1-cycles when $n$ is even, although not nearly as steep. For example, on average we expect that every second \emph{Mallows-Euclidean} instance containing 200 agents admits a 1-cycle. The situation is even more different when $n$ is odd -- here, \emph{Attributes} and \emph{Mallows-Euclidean} actually show an increase in the likelihood of 1-cycles as the number of agents increases (while this likelihood remains near zero for \emph{IC}). 

We further observe that the alternating jumps in the solvability of \emph{2-IC} instances are reflected in the likelihoods of odd-length cycles, and \emph{Mallows-Euclidean} instances are very likely to admit multiple 3-cycles on average (but few cycles of other odd lengths). A last observation is that for \emph{IC} and $n$ even, the probability of an instance being unsolvable roughly coincides with the expected number of cycles of length 3. This is not the case for $n$ odd, where $1-\hat{P}_n$ is much higher than the expected number of odd cycles of any length.

Although we looked at many different types of observations for the odd cycles in random instances, clear patterns emerged. Specifically, we saw that we usually expect a very small number of odd cycles and their lengths to be short. The likelihood of specific cycle lengths varies by statistical culture, but clear hierarchies are maintained and 3-cycles are by far the most likely.

\section{Two Practical Implications for {\sc Stable Roommates} Problems}
\label{sec:implications}

Before concluding this paper, we give two practical implications of our empirical structural results for {\sc sr} solutions. First, we will establish that maximum stable matchings are very close to being maximum matchings, and then we will discuss what our results for the number of stable solutions mean for NP-hard stable matching and stable partition problems. Note that all values are derived from the same instances and experiments discussed in Section \ref{sec:experiments}.

\paragraph{The Sizes of Maximum Stable Matchings} Our results suggest that while the average number of agents in odd cycles can grow moderately fast as $n$ increases, the average number of odd cycles only exhibit very slow growth with respect to $n$ for most statistical cultures. This gives further motivation for maximum stable matchings as a solution concept, in which all but one agent from each odd cycle will be matched in every such solution and which can be computed efficiently in $O(n^2)$ time \cite{tan91_2}.

To measure how large we expect maximum stable matchings to be, we introduce the following notion as a ratio between the size of a maximum stable matching and a complete matching. 

\begin{definition}[$\alpha(I)$ and $\alpha_n$]
    Let $I$ be an {\sc sr} instance with $n$ agents and let $M$ be a maximum stable matching of $I$. Furthermore, let $M'$ be a maximum matching (not necessarily stable) of $I$, which is always of size $\lfloor \frac{n}{2}\rfloor$ (due to complete preference lists). Now let $\alpha(I)$ denote the ratio between $\vert M\vert$ and $\vert M'\vert$, i.e., $\alpha = \frac{\vert M\vert}{\vert M_p\vert}=\frac{\vert M\vert}{\lfloor \frac{n}{2}\rfloor}$. Notice that $\alpha(I)$ is independent of the specific choice of $M$, as all maximum stable matchings have the same size. Finally, for a random {\sc sr} instance with $n$ agents, let $\alpha_n$ denote the expected value of $\alpha(I)$.
\end{definition}

Now, to get a better understanding for the behaviour of this multiplicative ratio and its expected behaviour for our instances, we calculated estimates for $\alpha_n$ (denoted by $\hat\alpha_n$) explicitly for our four statistical cultures of interest for $n$ even and odd, the results of which can be found in Figure \ref{fig:maximumstablevperfect_subplots}. 

As the $y$-axis scaling already indicates, $\alpha_n$ is expected to be high, especially for $n$ odd. Furthermore, all cases suggest an (eventually) increasing trend where $\alpha$ is proportional to $n$, except for odd-sized \emph{Mallows-Euclidean} instances. In fact, $\alpha_n$ appears to be bounded below (in expectation) by 0.99 for \emph{IC, 2-IC} and \emph{Attributes} instances for sufficiently large $n$ ($n\geq 58$, $n\geq 200$ and $n\geq 400$, respectively).

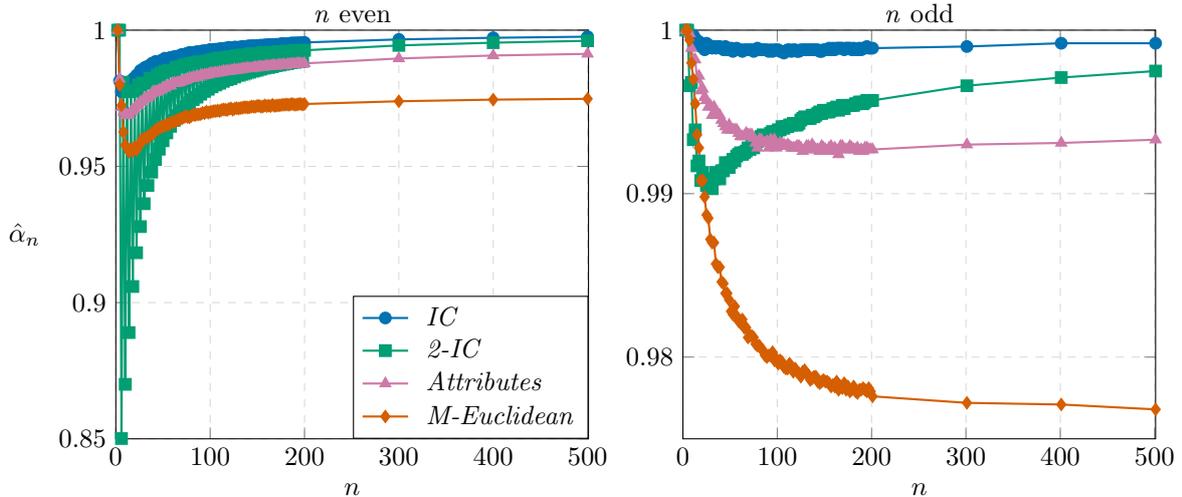
\begin{figure}[!htb]
    \centering

    \begin{subfigure}{0.49\textwidth}
        \begin{tikzpicture}
            \begin{axis}[
                width=\textwidth,
                height=7cm,
                ymin=0.85,       
                ymax=1,         
                xmin=0,
                xmax=501,
                xlabel={$n$},
                grid=both,
                grid style={dashed, gray!30},
                cycle list name=color list,
                every axis plot/.append style={thick},
                ylabel={$\hat\alpha_n$},
                ylabel style={rotate=-90},
                title={\textit{n} even},
                title style={
                    yshift=-1.5ex  
                },
                legend cell align={left},
                legend style={
                    at={(1,0)}, 
                    anchor=south east,
                    column sep=1ex, 
                },
            ]
                \addplot[acmDarkBlue, mark=*] table [x=n, y=ratio] {data/maximum/icEven.txt};
                \addplot[acmGreen, mark=square*] table [x=n, y=ratio] {data/maximum/2icEven.txt};
                \addplot[acmPink, mark=triangle*] table [x=n, y=ratio] {data/maximum/attributesEven.txt};
                \addplot[acmOrange, mark=diamond*] table [x=n, y=ratio] {data/maximum/MEuclEven.txt};
            \addlegendentry{\emph{IC}}
            \addlegendentry{\emph{2-IC}}
            \addlegendentry{\emph{Attributes}}
            \addlegendentry{\emph{M-Euclidean}}
            \end{axis}
        \end{tikzpicture}
    \end{subfigure}
    \hfill
    \begin{subfigure}{0.49\textwidth}
        \begin{tikzpicture}
            \begin{axis}[
                width=\textwidth,
                height=7cm,
                ymin=0.975,       
                ymax=1,         
                xmin=0,
                xmax=501,
                ytick={.98, .99, 1},
                xlabel={$n$},
                grid=both,
                grid style={dashed, gray!30},
                cycle list name=color list,
                every axis plot/.append style={thick},
                title={\textit{n} odd},
                title style={
                    yshift=-1.5ex  
                },
            ]
                \addplot[acmDarkBlue, mark=*] table [x=n, y=ratio] {data/maximum/icOdd.txt};
                \addplot[acmGreen, mark=square*] table [x=n, y=ratio] {data/maximum/2icOdd.txt};
                \addplot[acmPink, mark=triangle*] table [x=n, y=ratio] {data/maximum/attributesOdd.txt};
                \addplot[acmOrange, mark=diamond*] table [x=n, y=ratio] {data/maximum/MEuclOdd.txt};
            \end{axis}
        \end{tikzpicture}
    \end{subfigure}

    \caption{Ratio between the average size of a maximum stable matching and a maximum matching}
    \label{fig:maximumstablevperfect_subplots}
\end{figure}

To confirm our intuitions, we performed another experiment generating very large instances to generate estimates for $P_n$ and $\alpha_n$, the results of which can be seen in Table \ref{table:alphabig}\footnote{Due to the required computation time, these results are averaged over 3000 rather than previously 7000 instances for each (size, statistical culture) pair.}. While our estimate for the solvability probability is 0 for six of the eight cases considered in the table, our estimate for $\alpha_n$ is close to 1 for most cases. Interestingly, although \emph{Mallows-Euclidean} instances with $n=5001$ do not show a significant decrease in the estimate for $\alpha_n$ compared to $n=501$, it does not have an upwards trend either. In fact, the values in the table suggest that for each statistical culture, their $\hat{\alpha}_n$ values for $n$ even and odd align more closely as $n$ increases.

\begin{table}[!htb]
    \centering
    \begin{tabular}{c c c c c}
        \toprule
        & \multicolumn{2}{c}{$n=5,000$} & \multicolumn{2}{c}{$n=5,001$} \\
        \cmidrule(lr){2-3} \cmidrule(lr){4-5}
        Culture & $\hat{P}_n$ & $\hat{\alpha}_n$ & $\hat{P}_n$ & $\hat{\alpha}_n$ \\
        \midrule
        \textit{IC} & 0.2663 & 0.9996 & 0.0000 & 0.9998 \\
        \textit{2-IC} & 0.1003 & 0.9994 & 0.0000 & 0.9995 \\
        \textit{Attributes} & 0.0000 & 0.9956 & 0.0000 & 0.9958 \\
        \textit{M-Euclidean} & 0.0000 & 0.9762 & 0.0000 & 0.9764 \\
        \bottomrule
    \end{tabular}
    \caption{Estimates of $P_n$ and $\alpha_n$ for very large $n$}
    \label{table:alphabig}
\end{table}

\paragraph{Finding Optimal Stable Matchings and Partitions}

Our experimental results in Section \ref{sec:numpartitions} suggest that, for a small and medium number of agents, the enumeration of reduced stable partitions is often feasible in practice. As previously mentioned, there is a large body of work on NP-hard problems that involve finding some sort of ``fair'' or ``optimal'' stable matching or stable partition in an {\sc sr} instance (see, for example, \cite{federegal, federegalapprox,gusfieldegalapprox,simola2021profilebased,CooperPhD,glitznersagt24}). However, when given the set of reduced stable partitions for the problem instance, many types of optimal stable matching or optimal stable partition can be found through a reduced stable partition satisfying the desired criteria (without inspecting the potentially much larger set of all stable partitions, see Tables \ref{table:numstructureseven}-\ref{table:numstructuresodd}), as \textcite{glitzner2024structuralalgorithmicresultsstable} showed for a selection of such problems. Now the caveat is, of course, that the set of reduced stable partitions could be exponentially large in the worst case.

In practice, however, we showed that the number of reduced stable partitions grows relatively slowly as the instance size increases, for the statistical cultures that we analysed. Refering back to Figure \ref{fig:rpgrowth_subplots}, the growth of $\vert \mathcal{RP}\vert$ actually slows on average for some statistical cultures as $n$ increases, showing a behaviour far from exponential growth. While other statistical cultures do suggest an exponential growth, note that the average number of reduced stable partitions for $n=500$ is still below 14 in all cases, rendering exhaustive enumeration very feasible in practice.


\section{Conclusion and Outlook}
\label{sec:conclusion}

This paper presents a detailed empirical analysis, complemented by new structural results, of solvable and unsolvable instances of the {\sc Stable Roommates} problem, with a focus on their internal structures and their relationship to solution concepts such as maximum stable matchings. Our key findings are summarized as follows:

\begin{itemize}
    \item The ratio of solvable versus unsolvable instances varies significantly between different statistical cultures. Notably, we observed distinct behaviors between instances with even and odd values of $n$, as well as other nuanced separations.
    \item While the total number of stable partitions can grow rapidly with $n$ (for some statistical cultures), the number of reduced stable partitions grows, on average, very slowly. This makes their enumeration feasible even for instances with up to 500 agents. For some families of instances, we even showed that stable partitions (and thus stable matchings, if any) are unique.
    \item Despite the fact that the total number of stable partitions can grow somewhat quickly with $n$, there is very little variability between the structures contained within them, measured by the number of distinct stable cycles and pairs.
    \item Despite a much higher prevalence of unsolvable instances for large $n$, the number of odd cycles (the primary cause of unsolvability) remains low, with a clear hierarchy of likelihoods favoring cycles of length 3. The growth also decays and we gave some new estimates for the average growth function. 
\end{itemize}

Leveraging these insights, we demonstrated that maximum stable matchings are nearly complete in most cases, offering a practically viable solution concept despite their worst-case theoretical limitations. Our results also suggest that many NP-hard problems concerning ``optimal'' and ``fair'' stable matchings can be efficiently solved in practice due to the limited number of reduced stable partitions (and thus stable matchings).

The main open problems are clear: develop new tools to settle the key question regarding the behaviour of $P_n$ in the limit, and, in a similar spirit, also investigate the behaviour of our newly introduced parameter $\alpha_n$ in the limit. Specifically, we propose to investigate whether $\lim_{n\rightarrow \infty}\alpha_n < 1$ or $\lim_{n\rightarrow \infty}\alpha_n = 1$, and if this question depends on the limiting behaviour of $P_n$. 

It would also be of practical interest to design fast implementations for the enumeration of reduced stable partitions, for example using low-level programming, parallelization, or other computational tools, and investigate the speed of these techniques when solving NP-hard stable matching and stable partition problems in the {\sc sr} setting. 

The connection between stable partitions and almost stable matchings presents another promising direction, where structural insights could be used to design improved approximation algorithms.

\paragraph{Acknowledgements} 

The authors would like to thank William Pettersson for his helpful advice and support with the cluster computations.

\paragraph{Software and Data Availability}

Our code is openly available, see reference \cite{experimentsCode}. For instructions on how to generate the same instances that we used in our experiments, and to replicate our results, we refer to the {\tt README.md} file.

\printbibliography

\end{document}